\let\hat\widehat
\newcolumntype{L}[1]{>{\raggedright\let\newline\\\arraybackslash\hspace{0pt}}p{#1}}
\newcolumntype{C}[1]{>{\centering\let\newline\\\arraybackslash\hspace{0pt}}p{#1}}
\newcolumntype{R}[1]{>{\raggedleft\let\newline\\\arraybackslash\hspace{0pt}}p{#1}}
\newcommand{\yc}[1]{{{\color{black}  #1}}}
\newcommand{\PP}{{\mathbf{P}}}
\newcommand{\bbb}{{\boldsymbol \beta}}
\newcommand{\HH}{\mathbf{H}}
\newcommand{\hh}{\mathbf{h}}
\newcommand{\yy}{\mbox{$\mathbf y$}}
\newcommand{\ww}{\mathbf w}
\newcommand{\Zp}{\mathbb{Z}_+}
\newcommand{\card}{\dim}
\newcommand{\fd}{\mathfrak d}
\newcommand{\spaceo}{\mathcal{S}_{\mathrm{o}}}
\newcommand{\lleq}{\preccurlyeq}
\newcommand{\vv}{\mbox{$\mathbf v$}}
\newcommand{\ab}{\bm \varnothing}
\newcommand{\uu}{\mbox{$\mathbf u$}}
\newcommand{\eqd}{{\stackrel{d}{=}}}
\newcommand{\KKa}{\mathbb{K}_{\textrm{a}}}
\newcommand{\argmax}{\operatornamewithlimits{arg\,max}}
\newtheorem{theorem}{Theorem}
\newtheorem{lemma}{Lemma}
\newtheorem{proposition}{Proposition}
\newtheorem{remark}{Remark}
\title{Item Quality Control in Educational Testing: Change Point Model, Compound Risk, and Sequential Detection}
\author{Yunxiao Chen\\
London School of Economics and Political Science\\
Yi-Hsuan Lee\\
Educational Testing Service \\
Xiaoou Li\\
University of Minnesota}
\date{}
\begin{document}
\maketitle

\doublespacing

\begin{abstract}
In standardized educational testing, test items are reused in multiple test administrations. To ensure the validity of test scores, the psychometric properties of items should remain unchanged over time. In this paper, we consider the sequential monitoring of test items,
in particular, the detection of abrupt changes to their psychometric properties, where a change can be caused by, for example, leakage of the item or change of the corresponding curriculum.
We propose a statistical framework for
the detection of abrupt changes in individual items.
This framework consists of (1) a multi-stream Bayesian change point model describing sequential changes
in items, (2) a compound risk function quantifying the risk in sequential decisions, and (3) sequential decision rules that control the compound risk. Throughout the sequential decision process, the proposed decision rule balances the trade-off between two sources of errors, the
false detection of pre-change items and the non-detection of post-change items.
An item-specific monitoring statistic is proposed based on an item response theory model that eliminates the confounding from the examinee population which changes over time. Sequential decision rules and their theoretical properties are developed under two settings: the oracle setting where the Bayesian change point model is completely known and a more realistic setting where some parameters of the model are unknown.
Simulation studies are conducted under settings that mimic real operational tests.
\end{abstract}	
\noindent
KEY WORDS: Standardized testing, test security, item preknowledge, sequential change point detection, multi-stream data, compound decision

\section{Introduction}

The administration of a standardized educational test typically relies on an item pool, where items are repeatedly chosen from the pool to
assemble test forms. To maintain the validity and reliability of a standardized test over time, it is important to ensure that the psychometric properties of items in the pool remain unchanged. An item may need to be revised or removed from the pool once its psychometric properties
encounter a significant change, which may be caused by various reasons such as its leakage to the public or change of the corresponding curriculum. An important and challenging problem that test administrators face is to periodically review their testing data and detect the changed items as early as possible (see Chapter 4, \citealp{aera2014standards}).

{Following the discussion in \cite{lee2021studying}, we divide educational tests into two categories -- tests with infrequent  and frequent testing schedules. Infrequent tests include college admission tests like American College Test (ACT) and Scholastic Assessment Test (SAT) and survey assessments like the National Assessment of Educational Porgress (NAEP) and the Programme for International Student Assessment (PISA), where ACT and SAT deliver seven administrations per year, and survey assessments are typically delivered once every a few years. Frequent tests include both continuous tests that are delivered daily and frequent but non-continuous tests that are available once or more times per week. Examples of continuous tests include the Graduate Record Examinations (GRE) general test and the Praxis Core Academic Skills for Educators tests that follow the Multi-Stage Testing (MST, \citealp{yan2016computerized}) and  fixed-form testing designs, respectively. Frequent but non-continuous tests are also very common. For example, the Test of English for International Communication  (TOEIC) speaking test had 281 administrations in 3 years \citep{qu2017evaluating}, and another  assessment of English proficiency had 498 administrations in 6 years \citep{qian2021model}, both of which are fix-form tests.
This paper considers item pool monitoring for frequent tests, in which items are reused more frequently
and thus are more likely to be leaked. In particular, we focus on frequent but non-continuous tests with a fixed-form design. Generalization to other  frequent test settings is also discussed.}

To tackle this problem, we adopt a multi-stream sequential change point formulation. Each item corresponds to a data stream, for which data are collected sequentially from test administrations over time. Each data stream is associated with its own change point. The change point corresponds to a distributional change in some monitoring statistics which reflect certain psychometric properties of the item. That is, the monitoring statistics follow one distribution at any time before the change point, and follow a different distribution after the change point. Roughly speaking, our goal is to detect as many post-change items as possible at each time point, without making too many false detections of pre-change items. The detected items will be reviewed by the test developers to check their validity. Further actions, such as removing items, may follow.

To provide a sensible solution to this change detection problem for item quality control,
we propose a statistical decision framework. This framework consists of (1) a
multi-stream Bayesian change point model describing the data streams with change points,
(2) a compound risk function quantifying the risk in sequential decisions, and (3) sequential
decision rules that aim at detecting as many post-change items as possible while controlling the compound risk to be below a pre-specified level. Specifically, our risk function can be viewed as a measure of the proportion of post-change items among the undetected items given all the {up-to-date}  information {from the monitoring statistics, where the information will be formalized by an information filtration (see Section~\ref{subsec:general} for the definition)}. We emphasize that this risk function measures the overall item-pool-level risk, instead of
item-level risk.
It is thus suitable for the purpose of controlling item pool quality as a whole.
The quality of undetected items can be guaranteed by controlling their compound risk.
Consequently,
only the detected items need a validity check.
Our development considers two different settings, including an oracle setting for which the Bayesian change point model is completely known, and a more realistic setting where only partial information is available about the model.

The current development is a significant extension of \cite{chen2019compound}, where the compound sequential change detection
framework is first proposed. 
First, a more general model is considered 
that is more suitable for item quality control. Specifically, it takes into account
item exposure and addition of new items, two important features of test administration and maintenance.
Second, we extend the development in \cite{chen2019compound} to a more realistic setting when  only partial information is available about the Bayesian change point model. A change detection procedure is proposed that is shown to control the compound risk.
Third, a monitoring statistic is proposed based on an item response theory model. This statistic adjusts for confounding from examinee population changes, so that changes in item properties can be better detected.
Finally, simulation studies are conducted under settings that mimic the administrations of operational tests.

The proposed framework is closely related to, but substantially different from,  the classical sequential change detection problem for a single data stream  \citep{shiryaev1963optimum,roberts1966comparison,page1954continuous,shewhart1931economic}, as well as recent developments on change detection for multiple streams \citep[e.g.,][]{mei2010efficient,xie2013sequential,chan2017optimal,chen2015graph,chen2019sequential,chen2020false}.
The major difference is that the existing works, except for  \cite{chen2019compound}  and \cite{chen2020false}, consider the detection of a single change point, even with multi-stream data. Consequently, they do not handle a compound risk which aggregates information on the change points of different data streams.

This framework also closely connects to compound decision theory \citep[see e.g.,][]{zhang2003compound} which dates back to the seminal works of \cite{robbins1951asymptotically,robbins1956empirical}. Specifically, the compound risk that we control at each time point
can be viewed as a local false non-discovery rate studied in  \cite{efron2001empirical} and \cite{efron2004large,efron2008microarrays,efron2012large} for
testing multiple hypotheses.
The same risk measure has been applied in \cite{chen2019statistical} for the detection of leaked items and cheating examinees in a single test administration. The proposed method shares the same scalability as the local false discovery and non-discovery rates for multiple testing.
That is,  no matter how large the item pool is, it is always sensible to use the proposed procedure without changing the threshold for compound risk, while, on the other hand, error metrics like the familywise error rate are far less scalable.
In the sequential analysis literature, 
the idea of compound decision is rarely explored, except in \cite{song2019sequential} and \cite{bartroff2018multiple} where
compound decision theory for sequential multiple testing  is developed, and in \cite{chen2019compound} where the compound decision framework for multi-stream change detection is first proposed.

The sequential monitoring of test quality has also been an important topic in the field of educational testing.
For example,
 to monitor item quality,
 \cite{veerkamp2000detection} applied the CUSUM method \citep{page1954continuous} to sequentially detect changes in item difficulty. \yc{\cite{zhang2014sequential} and \cite{zhang2016monitoring} proposed a series of sequential statistical hypothesis tests for monitoring the item pool of a computerised adaptive testing system.}
\cite{choe2018sequential} proposed sequential change detection procedures for the detection of compromised items based on both item responses and response times.
\cite{lee2016test}
 used CUSUM statistics to monitor item performance and detect item preknowledge in continuous testing.
The existing methods focus on detecting changes in individual items, while, as we will discuss in Section~\ref{sec:dec}, the proposed compound decision framework provides better integrative decisions for the entire item pool.
To monitor general test quality, \cite{lee2013monitoring} proposed sequential procedures to monitor score stability and assess scale drift of an educational assessment over time.

The rest of the paper is organized as follows. In Section~\ref{sec:model}, we propose a general Bayesian change point model and a compound sequential detection procedure, followed by a specific model that can be easily implemented in operational tests.  The theoretical properties of the proposed decision rule are established. Section~\ref{sec:unknown}
extends the development to a more realistic setting where some parameters of the Bayesian change point model are unknown. A compound decision rule is proposed under this setting and its statistical properties are proved.
Section~\ref{sec:stat}  discusses the problem of item quality control and the confounding due to the examinee population change over time. A
statistic based on an item response theory model is proposed, where this confounding factor is controlled.
The performance of the proposed method is further evaluated in Section~\ref{sec:sim} via simulation studies. We conclude the paper with remarks in Section~\ref{sec:conc}.

\section{Bayesian Change Point Detection}\label{sec:model}

\subsection{General Framework}\label{subsec:general}
We start with a general statistical framework for change detection in multiple data streams. {For ease of exposition, we consider a frequent but non-continuous test with fixed forms. Example tests of this type are discussed in the introduction. As will be discussed in Section~\ref{sec:dec}, the proposed procedure can be generalized to other frequent tests, such as continuous tests with Computerized Adaptive Testing (CAT) or Multiple-Stage Testing (MST) designs.} We use $t = 1, 2, ... $ to record test administrations; for example, $t = 1$ denotes the first test administration. Let $S_t$ denote the item pool at time $t$ that contains the items available for the $t$th test administration. The item pool is allowed to change over time, due to (1) the deletion of problematic items (e.g., items detected to have changed) and (2) the addition of new items. 

For each item $k$, we monitor a certain statistic that reflects the psychometric properties of the item, denoted as $X_{kt}$, based on the data from all examinees in the $t$th test administration. This statistic may be univariate or multivariate, calculated based on data from the $t$th test administration.
The monitoring statistic needs to be constructed carefully to adjust for confounding due to the change of the examinee population (e.g., seasonal effect), in order to reflect the real changes in individual items. For example, it is not a good idea to simply monitor the item percent correct. This is because, the item percent correct may significantly increase in an test administration, due to that its examinee population overall has higher ability.
As will be discussed in Section~\ref{sec:stat},
one possible way to adjust for population change is by
using an item response theory model \citep[IRT;][]{lord2008statistical,lord1980applications}.
In addition, the value of $X_{kt}$ may be missing, as only a subset of items from the item pool $S_t$ will be used in the test administration.
We use
$S_t^* \subset S_t$ to denote the set of items being used in the $t$th test administration. That is, $X_{kt}$ is observed, if and only if $k\in S_t^*$. Figure~\ref{fig:flow} provides a flow chart illustrating this stochastic process. For the $t$th test administration, we start with an item pool $S_t$ which is determined by information from the previous test administrations. Then $S_t^*$ is selected from $S_t$ as the set of items for the $t$th test administration  by test developers or certain test assembly algorithms. For these items, response data are collected, leading to monitoring statistics $X_{kt}$, $k\in S_t^*$. Historical information will be used to determine the item pool for the $(t+1)$th administration, by deleting problematic items and/or adding new items.

\begin{figure}
  \centering
  \includegraphics[scale = 1.8]{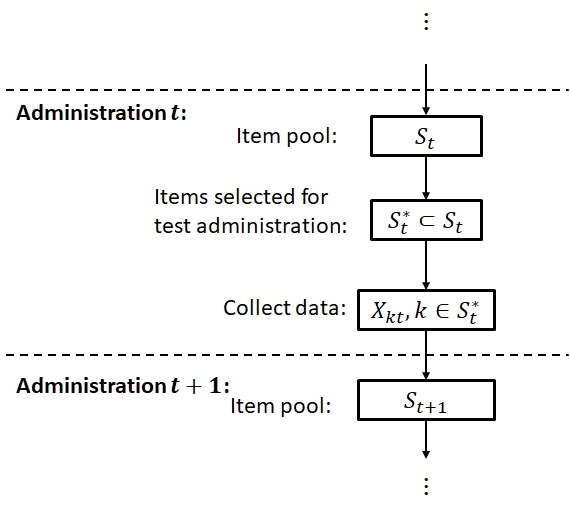}
  \caption{A flow chart for the stochastic process of sequential test administration.}\label{fig:flow}
\end{figure}

Each item $k$ is associated with a change point, denoted by $\tau_k$, which takes value in {$\{1, 2, ...\} \cup \{\infty\}$}.
More precisely, $\tau_k$ is the time at which the change  in the monitoring statistic $X_{kt}$ occurs to item $k$.
Here, we rule out the possibility that $\tau_k=0$ as it is sensible to assume that items can only change after some exposure\footnote{\yc{Theoretically, the proposed Bayesian change detection framework can allow $\tau_k=0$ to occur with a positive probability. We make the assumption that $\tau_k>0$ because the major application of the proposed method considers the detection of item compromisation that is due to previous exposure of the items \citep[see e.g.,][]{veerkamp2000detection}.}}.
For example, the change of an item may be due to
its leakage to the public at that time. The change time $\tau_k = \infty$ means that the item never changes.
\yc{Throughout this paper, we view $\tau_k$ as a random variable, whose prior distribution is allowed to vary across different streams.}
The distribution of $X_{kt}$ only depends on the change point $\tau_k$ and is independent of other variables in the model. That is,
\begin{equation}\label{eq:dists}
X_{kt}\vert \tau_k  \sim \left\{\begin{array}{cc}
                                 p_{kt} & \mbox{~if~} t \leq \tau_k \\
                                 q_{kt} & \mbox{~if~} t > \tau_k,
                               \end{array}\right.
\end{equation}
where $p_{kt}$ and $q_{kt}$ are the density functions of the pre- and post-change distributions, respectively. For now, we assume $p_{kt}$ and $q_{kt}$ are both known, for example, two normal distributions whose means and variances are given. We discuss in Section~\ref{sec:unknown} the situation when only partial information is available about these two distributions. Note that both the pre- and post-distributions may depend on $t$, for example, through the number of examinees in the corresponding test administration.
\yc{Once data have been collected at time $t$, we would like to detect the post-change items, given all the information that is currently available. The sequential detection rule can be described by a detection set $D_t \subset S_t$, consisting of items that are likely to have changed. We point out that the proposed framework is very general,
allowing
the detected items in $D_t$ to be removed or kept in the item pool $S_{t+1}$ at the next time point.}


Figure~\ref{fig:flow2} provides a toy example to illustrate this change point model. In this example, at $t = 1$, the item pool only contains items 1 and 2, and both are used in this test administration.
As no change has occurred to these two items yet, the monitoring statistics for both items follow their pre-change distributions as indicated by the circles. After the first test administration, a change point occurs to item 1, recorded by $\tau_1 = 1$.
At $t = 2$, item 3 is added to the item pool, leading to $S_2 = \{1, 2, 3\}$. In this test administration, items 1 and 3 are used and thus $S_2^* = \{1,3\}$.
As item 1 has already changed, the monitoring statistic $X_{12}$ now follows a post-change distribution, as indicated by the square.
Based on information from the first two test administrations, item 1 is detected and removed from the pool in this toy example, resulting in $S_3=\{2,3\}$.
The process can further evolve at $t = 4, 5, ...$, following the flow chart in Figure~\ref{fig:flow}. 


\begin{figure}
  \centering
  \includegraphics[scale = 1.8]{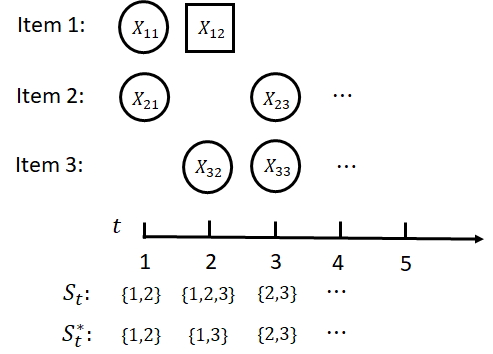}
  \caption{A toy example illustrating the stochastic process of test administration. The pre- and post-change distributions of a monitoring statistic are indicated by circle and square, respectively. }\label{fig:flow2}
\end{figure}

We detect changed items by monitoring existing data at each time point $t$. More precisely, the existing information after the $t$th test administration is recorded by a sigma-field $\mathcal F_t$ defined recursively as
$\mathcal F_t = \sigma(\mathcal F_{t-1}, S_t, S_t^*, X_{kt}, k\in S_t^*)$
with {$\mathcal F_1 = \sigma(\{S_1, S_1^*, X_{k1}, k\in S_1^*\})$.} {Note that the sigma-field $\mathcal F_t$ defines a information filtration satisfying $\mathcal F_s\subset \mathcal F_t$ for all $s\leq t$, meaning that as time passes, more and more information becomes available about the stochastic process of $X_{kt}$s.
Note that information filtration is a key concept in stochastic processes.
We refer the readers to \cite{florescu2014probability} for the mathematical details of an information filtration and its interpretation as accumulated information up to each time point.}

We consider a Bayesian setting where the change points $\tau_k$ are viewed as random variables, for which the posterior probabilities $P(\tau_k< t \vert \mathcal F_t)$ can be evaluated at each time point $t$ for any $k \in S_t$.
The detection of post-change items will be based on these posterior probabilities. \yc{More precisely, we consider sequential detection rules $D_t$ that are adaptive to the information filtration $\mathcal F_t$; i.e., $D_t$ is a random set that is measurable with respect to $\mathcal F_t$. It means that the sequential decision $D_t$ is made only based on the information that is currently available as recorded in $\mathcal F_t$.}
We remark that, under this framework, $S_{t+1}$ may be determined not only by the change detection results  suggested by statistical algorithms like the one proposed herein but also by the domain knowledge of the  test developers, which is common practice in the educational testing industry.
That is, the detection results only provide the testing program warnings on
potentially problematic items.
These items will be reviewed by the test developers and then decisions will be made on whether to delete some existing items from the item pool and whether to add new items.

\subsection{Proposed Compound Detection Procedure}\label{sec:dec}

We now propose a sequential change detection rule under the above general model.
Following the discussions above, at each time $t$, it seems natural to flag the items whose posterior probability $P(\tau_k < t \vert \mathcal F_t)$ is large, as a larger posterior probability implies a higher chance of having changed.
The question is, what cut-off value should we choose when making the decision? There is a trade-off behind this decision. On one hand, we would like to
detect as many post-change items as possible.
On the other hand, we want to avoid making many false detections of pre-change items, as false detections lead to unnecessary labor cost on item development, review  and maintenance.  In what follows, an optimization program will be proposed to balance this trade-off.


Recall that the goal of our monitoring procedure is to maintain the quality of an item pool that may be
measured by the proportion of post-change items in the pool at each time $t$. For example, the quality of $S_t$ can be measured by
$(\sum_{k\in S_t} 1_{\{\tau_k < t\}})/|S_t|$. The smaller the proportion, the better the quality of the item pool. As the change points $\tau_k$ for items $k \in S_t$ are unknown, this proportion is random. At a given time $t$, the best estimate of this quantity (under the mean squared error loss) is its conditional mean adaptive to the current information sigma-field $\mathcal F_t$.

The same quality measure can be used in the detection procedure. More precisely, let $D \subset S_t$ be the detection set after the $t$th test administration. The risk associated with the detection set $D$ adaptive to the information filtration can be measured by
\begin{equation}\label{eq:riskfunction}
R(D\vert \mathcal F_t) =  E\left(\frac{\sum_{k\in S_t\setminus D} 1_{\{\tau_k < t\}}}{\max\{|S_t \setminus D|,1\}}\big\vert \mathcal F_t \right),
\end{equation}
where the denominator is chosen so that $R(D\vert \mathcal F_t)$ is well-defined even when $D = S_t$.
The smaller value of $R(D\vert \mathcal F_t)$ implies the better quality of the undetected items
and thus a lower risk.
Therefore, a reasonable criterion is to control the risk $R(D\vert \mathcal F_t)$ to be below a pre-specified threshold (e.g., 1\%). Under this criterion, the expected proportion of post-change items in the undetected set is below the threshold and thus the item pool is
overall of high-quality. Consequently, in preparation for future test administrations, only the detected items need investigation.

\yc{Following the terminology of compound decision theory \citep{efron2012large}, we refer to the ratio $\big({\sum_{k\in S_t\setminus D} 1_{\{\tau_k < t\}}}\big)/{\max\{|S_t \setminus D|,1\}}$ as the False Nondiscovery Proportion (FNP) and the risk $R(D\vert \mathcal F_t)$ as the local False Nondiscovery Rate (FNR). Similarly, we define the False Discovery Proportion (FDP)
as ${\sum_{k\in D} 1_{\{\tau_k  \geq t\}}}/{\max\{1,|D|\}},$ that is the proportion of pre-change items in the detection set. The  local False Discovery Rate (FDR) is defined as the   conditional expectation of the FDP.
}

The proposed decision rule at time $t$ is to minimize the size of the detection set while controlling local FNR to be below a given threshold $\alpha$, i.e.,
\begin{equation}\label{eq:one-step}
D_t = \arg\min_{D\subset S_t} |D|, \mbox{~s.t.~} R(D\vert \mathcal F_t) \leq \alpha.
\end{equation}
Minimizing the detection set avoids making too many false detections of pre-change items, and the constraint on
local FNR ensures the detection of a sufficient number of post-change items.
\yc{The proposed decision rule $D_t$ can be obtained by is Algorithm~\ref{alg:one-step} below.  With given posterior probabilities, the computation of Algorithm~\ref{alg:one-step} is dominated by the sorting step whose complexity is $O(|S_t|\log|S_t|)$. The computation of posterior probabilities $W_{kt} = P(\tau_k < t\vert \mathcal F_t)$ under a specific change point model will be discussed in Section~\ref{subsec:changemodel} below.}
The detection rule \eqref{eq:one-step} is adaptive, in the sense that it makes use of up-to-date information $\mathcal F_t$. It is also compound, as the threshold on the posterior probabilities $W_{kt}$ is determined by the posterior probabilities of all the items in the current item pool $S_t$. \yc{The proposed procedure is optimal in the sense described in Proposition~\ref{prop:one-step}.}

\yc{
\begin{proposition}\label{prop:one-step}
The sequential decision rule $D_t$ given by Algorithm~\ref{alg:one-step} satisfies that $R(D_t|\mathcal{F}_t)\leq \alpha$.  In addition, for any other sequential decision rule $D'_t \subset S_t$ that is $\mathcal F_t$ measurable and satisfies $R(D_t'|\mathcal{F}_t)\leq \alpha$, we have $|D_t| \leq |D'_t|$ and
$$E\big(\sum_{k\in D_t} 1_{\{\tau_k \geq t\}}\big) \leq E\big(\sum_{k\in D'_t} 1_{\{\tau_k \geq t\}} \big).$$
\end{proposition}
}

\yc{Proportion~\ref{prop:one-step} implies that the proposed sequential decision rule minimizes the expected number of false detections of pre-change items at the current step, among all sequential decision rules that control the local FNR below the same level. Under some arguably more restrictive assumptions on the change point model, the proposed decision rule is not only optimal at the current step, but also uniformly optimal throughout the entire sequential decision process. This optimality result is given in Appendix B.}

{\centering
\renewcommand{\algorithmicrequire}{\textbf{Input:}}
\renewcommand{\algorithmicensure}{\textbf{Output:}}
\begin{algorithm}
\caption{Proposed detection rule.\label{alg:one-step}}
\begin{algorithmic}[1]
\Require  Threshold $\alpha$, the current item pool $S_t$, and posterior probabilities $(W_{kt})_{k\in S_t}$, where
$$W_{kt} = P(\tau_k < t\vert \mathcal F_t).$$
\State Sort the posterior probabilities in an ascending order. That is,
$$W_{k_1,t} \leq W_{k_2,t} \leq \cdots \leq W_{k_{|S_t|},t},$$
where $S_t = \{k_1, ..., k_{|S_t|}\}$. To avoid additional randomness, when there exists a tie ($W_{k_i,t} = W_{k_{i+1},t}$), we require $k_i <k_{i+1}$.

\State For $n = 1, ..., |S_t|$, define
$$V_{n} = \frac{\sum_{i=1}^n W_{k_i,t}}{n}.$$
and define $V_{0} =0$.
\State Find the largest $n \in \{0, 1, ..., |S_t|\}$ such that
$V_{n} \leq \alpha.$
\Ensure $D_t = S_t \setminus \{k_1, ..., k_n\}$ if $n\geq 1$ and $D_t = S_t$ if $n = 0$.
\end{algorithmic}
\end{algorithm}
}

\begin{remark}[Comparison with existing procedures]\label{rmk:compare}
 \yc{ We compare the proposed change detection framework with existing works on sequential item pool monitoring. Two different approaches are taken in the existing works. Specifically, \cite{veerkamp2000detection} and \cite{lee2016test} apply the CUSUM procedure to sequential data for each individual item and declare a change once the CUSUM statistic exceeds a pre-specified threshold. \cite{zhang2014sequential}, \cite{zhang2016monitoring} and \cite{choe2018sequential} test the pre-change hypothesis for each individual item at each time point. A change is declared if the p-value from the hypothesis test is smaller than a pre-specified threshold. The posterior probabilities in the current method
 play a similar role as the CUSUM statistics and  p-values in these works to measure the likelihood of each item having changed.

More specifically, we provide the connection between the posterior probabilities monitored by the proposed method and the p-values monitored by the sequential hypothesis testing method.
For a given item $k$ and at each time point $t$, the sequential hypothesis testing approach tests the null hypothesis $H_0: \tau_k \geq t$ versus the alternative hypothesis $H_1: \tau_k < t$. Following the routine of
frequentist hypothesis testing, a p-value is obtained based on some carefully designed test statistic. If the p-value is below a pre-specified threshold, $H_0$ is rejected and the item is declared to have changed. The proposed method tests the same hypotheses but takes a Bayesian approach. The prior distribution for $\tau_k$ implies the prior probabilities
$P(\tau_k \geq t)$ and $P(\tau_k < t)$ for the null and alternative hypotheses, respectively. Given these
prior probabilities, together with the  pre- and post-change distributions of data, the Bayes formula provides us the posterior probabilities of the null and alternative hypotheses. Bayesian hypothesis testing makes decision based on these posterior probabilities. We refer the readers to \cite{wagenmakers2010bayesian} for a discussion on Bayesian hypothesis testing and a comparison with the frequentist approach.
In this sense, the proposed method can be viewed as a Bayesian version of the sequential hypothesis testing method.




 Although both the CUSUM and the sequential hypothesis testing methods can effectively detect post-change items, they do not provide an estimate of the number/proportion of post-change items among the undetected ones. Consequently, they cannot directly assess and control the quality of the item pool. These methods may still be able to control a similar risk as in the proposed method by tuning the corresponding threshold for declaring changes, but choosing a suitable threshold for this purpose is a challenging task that may require external information about the number/proportion of post-change items in the pool.

 In contrast, the proposed method can directly control the quality of item pool by controlling the local FNR. As a price, it requires to know the Bayesian model for change points,
 including
 the prior distribution for the change points, and the pre- and post-change distributions for the data streams. As will be explained in Section~\ref{sec:unknown}, the proposed method can be extended to
 control the local FNR given only some partial information about the Bayesian change point model.

  }
\end{remark}

\begin{remark}[Application to continuous testing]

{Item leakage may be more likely to occur in continuous testing with
Computerized Adaptive Testing (CAT), MST and fixed-form testing designs.} Continuous testing implies more frequent item usage from an item pool, for which monitoring item pool quality may be even more important. In particular, most of the existing works on the sequential detection of item changes are developed under a continuous CAT setting \citep{veerkamp2000detection,zhang2014sequential, zhang2016monitoring, choe2018sequential} or under a continuous MST or linear testing setting
\citep{lee2016test}.


We point out that the proposed framework is very general that can also be applied to monitoring the item pool of any continuous test. {A setting for frequent but non-continuous fixed-form tests
is adopted in Section~\ref{subsec:general}  for the ease of exposition, as our main focus is to introduce the  new compound sequential decision framework.} To apply the proposed framework to continuous testing, the meaning of each time point $t$ needs to be slightly different from that in the existing works  concerning a CAT setting. That is, most of the existing works   concerning a CAT setting  focus on individual items and examinees, where each time point for an item corresponds to its admission to an examinee. For example, for a given item, $t = 20$ means the item having been administered to 20 examinees. Consequently, the same $t$ does not necessarily correspond to the same calendar time for different items, as different items may have different exposure rates. This choice of time $t$ is thus not suitable for defining our compound risk that measures the item pool quality at some point of calendar time.
To apply the proposed method to continuous testing,
we can let each time point correspond to a fixed period of calendar time, for
example, one day or half a day. The duration of the time period may be chosen based on the
test volume to allow adequate sample sizes in computing the monitoring statistics.
The monitoring statistics at a time point are constructed based on all the item responses collected during the corresponding period.
The posterior distributions of change points can be updated based on the monitoring statistics. The compound risk at each time point can thus be defined and compound sequential decisions can be made accordingly. {See Appendix~E for a further discussion  on applying the proposed method to continuous testing.}





\end{remark}


\subsection{A Specific Change Point Model}\label{subsec:changemodel}

We now provide a specific model for illustration. We assume that the change point $\tau_k$ satisfies
$$|\{s: k \in S_s^*, s\leq \tau_k \}| =  \gamma_k,$$
where $\gamma_1$, $\gamma_2$, ... are independent, each of which follows a geometric distribution.
That is,

\begin{equation}\label{eq:geo-prior}
	\yc{P(\gamma_k = m) = (1-\rho_k)^{m-1}\rho_k,~ m=1, 2, ...,}
\end{equation}
where $\rho_k$ is an item-specific parameter in the open interval $(0,1)$.
This model implies that, on average, the status of an item changes (e.g., being leaked) after being used in $1/\rho_k$ test administrations (i.e., exposures).
We point out that the geometric distribution is widely used in the Bayesian formulation of sequential change detection because of its memoryless property. The proposed methods can be extended to other prior distributions for the change points.

We may further assume that both the pre- and post-change distributions are univariate normal.
Specifically, as will be justified by the possible choices of the monitoring statistic as in Section~\ref{sec:stat},
we assume that the pre-change distribution $p_{kt}$ is standard normal. The post-change distribution $q_{kt}$ is $N(\mu_{kt}, 1)$. For now, we consider the case where $\rho_k$ and $\mu_{kt}$ are both known and leave the unknown case to Section~\ref{sec:unknown}.

Under this model, the posterior distribution $P(\tau_k < t \vert \mathcal F_t)$ can be computed in an analytic form. To simplify the notation, we denote $W_{kt} = P(\tau_k < t \vert \mathcal F_t)$.
 This posterior probability can be obtained by a simple updating rule, summarized in the following proposition.
\begin{proposition}\label{prop:update-post}
	Assume $\gamma_k$ follows a geometric prior described in \eqref{eq:geo-prior} and let
$e_{kt} = \sum_{i=1}^t 1_{\{k\in S_i^*\}}$ be the number of exposure of  item $k$ up to time $t$.
Then $W_{kt}=\frac{U_{kt}}{U_{kt}+1/\rho_k}$, where $U_{kt}$
is computed through the following updating rule,
		\begin{equation}\label{eq:update}
 \text{ if } e_{kt} \leq 1,		U_{kt}=0,
\mbox{~else~}		U_{kt}=		
		\begin{cases}
			U_{k,t-1} \text{ if } k \in S_t\setminus S_t^*,\\
			(1+U_{k,t-1})\frac{q_{kt}(X_{kt})}{(1-\rho_k)p_{kt}(X_{kt})}
			 \text{ if }k \in  S_t^*.\\
		\end{cases}
\end{equation}
Recall that $p_{kt}$ and $q_{kt}$ are
the density functions of the pre- and post-change distributions, respectively, defined in \eqref{eq:dists}.
In particular, if $p_{kt}\sim N(0,1)$ and $q_{kt}\sim N(\mu_{kt},1)$, then
	\begin{equation}
			 \text{ if } e_{kt} \leq 1,		U_{kt}=0,
\mbox{~else~}	
		U_{kt}=		
		\begin{cases}
			U_{k,t-1} \text{ if } k \in S_t\setminus S_t^*\\
			\frac{1}{1-\rho_k}(1+U_{k,t-1})\exp\big\{\mu_{kt} X_{kt}-\frac{\mu_{kt}^2}{2}\big\} \text{ if }k \in  S_t^*.
		\end{cases}
	\end{equation}

\end{proposition}
In the above proposition, the statistic $U_{kt}$ is a modification of a classic sequential change detection statistic \citep{shiryaev1963optimum} which gives
optimal sequential change detection for a single data stream under a Bayesian decision framework. \yc{We point out that  $U_{kt}$ is not updated until item $k$ is exposed at least twice (i.e., $e_{kt} > 1$), because we do not allow $\tau_k = 0$. According to \eqref{eq:update},  the update of the posterior probabilities $W_{kt}$ is straightforward when the pre- and post-change distributions are known. These distributions are not necessarily normal, though it may be convenient to make the normality assumption in the current application
as discussed in Section~\ref{sec:stat}.}





\section{When Model is not Completely Known}\label{sec:unknown}

Now we consider the situation
in which only partial information is available about the change point model. More precisely, we focus on the specific change point model given in Section~\ref{subsec:changemodel}.
\yc{Recall that the change point $\tau_k$ follows
a geometric distribution \eqref{eq:geo-prior} with parameter $\rho_k$.}
It is further assumed that the pre-change distribution $p_{kt}$ is known, for example, a standard normal distribution.
In addition, it is assumed that the post-change distribution $q_{kt}$ can be parameterized as
$$q_{kt}(x) = h_{kt}(x\vert \boldsymbol\pi_k),$$
where $h_{kt}$ is a known function and
$\boldsymbol\pi_k$ is an item-specific parameter vector.
This parametrization of $q_{kt}$ will be justified under an item response theory model in Section~\ref{sec:stat}.



In practice,  $\rho_k$ and $\boldsymbol\pi_k$ are unknown, but prior information may be available. Specifically, $1/\rho_k$ represents the average number of exposures (i.e., number of times the item is used) for the item to change.  Although it is hard to know $1/\rho_k$ precisely, a reasonable lower bound is often available, which leads to an upper bound for $\rho_k$, denoted by $\overline{\rho} \geq \rho_k$, for all $k$.
In addition, as will be further justified in Section~\ref{sec:stat}, we assume that $\boldsymbol \pi_k \in \Theta$, where $\Theta$
is a known compact set.
%
%
%
%
In what follows, we propose a method that controls the compound risk $R(D_t \vert \mathcal F_t)$, when only knowing $\bar \rho$,
$p_{kt}$, and $\Theta$.



Let $W_{kt}(\rho,\boldsymbol\pi)$ denote the posterior probability $P(\tau_k<t|\mathcal F_t)$ when the underlying parameters are $\rho_k = \rho$ and $\boldsymbol\pi_k = \boldsymbol\pi$, and define $\overline{W}_{kt}$ as
\begin{equation}\label{eq:max-w}
	\overline{W}_{kt} = \sup_{\rho \in (0,\overline{\rho}],\boldsymbol\pi \in \Theta} W_{kt}(\rho,\boldsymbol\pi).
\end{equation}
\yc{Note that $\overline{W}_{kt}$ is not a posterior probability, but an upper bound of the posterior probability $W_{kt}(\rho_k, \boldsymbol\pi_k)$.}
We now provide Algorithm~\ref{alg:one-step-unknown} that
replaces ${W}_{kt}$
in Algorithm~\ref{alg:one-step}
by $\overline{W}_{kt}$. As shown in Theorem~\ref{thm:unknown}, the decision rule given by Algorithm~\ref{alg:one-step-unknown}
controls the compound risk $R(D_t \vert \mathcal F_t)$ at any time $t$.


{\centering
\renewcommand{\algorithmicrequire}{\textbf{Input:}}
\renewcommand{\algorithmicensure}{\textbf{Output:}}
\begin{algorithm}
\caption{Proposed detection rule.\label{alg:one-step-unknown}}
\begin{algorithmic}[1]
\Require  Threshold $\alpha$, the current item pool $S_t$, and $\overline{W}_{kt}$ defined in \eqref{eq:max-w}.
\State Sort the $\overline{W}_{kt}$ in an ascending order. That is,
$$\overline{W}_{k_1,t} \leq \overline{W}_{k_2,t} \leq \cdots \leq \overline{W}_{k_{|S_t|},t},$$
where $S_t = \{k_1, ..., k_{|S_t|}\}$. To avoid additional randomness, when there exists a tie ($\overline{W}_{k_i,t} = \overline{W}_{k_{i+1},t}$), we require $k_i <k_{i+1}$.

\State For $n = 1, ..., |S_t|$, define
$$V_{n} = \frac{\sum_{i=1}^n \overline{W}_{k_i,t}}{n}.$$
and define $V_{0} =0$.
\State Find the largest $n \in \{0, 1, ..., |S_t|\}$ such that
$V_{n} \leq \alpha.$
\Ensure $D_t = S_t \setminus \{k_1, ..., k_n\}$ if $n\geq 1$ and $D_t = S_t$ if $n = 0$.
\end{algorithmic}
\end{algorithm}
}

\begin{theorem}\label{thm:unknown}
Suppose that $\rho_k \leq \overline{\rho}$ and $\boldsymbol\pi_k \in \Theta$ for all $k$.
Then the proposed decision rule given in Algorithm~\ref{alg:one-step-unknown}  guarantees that $R(D_t |\mathcal{F}_t)\leq \alpha$ for all $t=1,2,\dots$
\end{theorem}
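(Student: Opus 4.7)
The plan is to derive Theorem~\ref{thm:unknown} as a direct monotonicity corollary of Proposition~\ref{prop:one-step}: Algorithm~\ref{alg:one-step-unknown} differs from Algorithm~\ref{alg:one-step} only in substituting the worst-case posterior $\overline{W}_{kt}$ for the true posterior $W_{kt}=P(\tau_k<t\vert\mathcal F_t)$, so showing that $\overline{W}_{kt}$ pointwise dominates $W_{kt}$ will immediately give the desired conservative risk control. First I would write out $R(D\vert\mathcal F_t)$ explicitly: for any $\mathcal F_t$-measurable $D\subset S_t$, pulling the indicator outside the conditional expectation gives
\begin{equation*}
R(D\vert\mathcal F_t)=\frac{\sum_{k\in S_t\setminus D}W_{kt}}{\max\{\vert S_t\setminus D\vert,1\}},
\end{equation*}
the identity already implicit in the proof of Proposition~\ref{prop:one-step}.

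Next I would establish the key dominance $W_{kt}\leq \overline{W}_{kt}$. The hypotheses $\rho_k\in(0,\overline{\rho}]$ and $\boldsymbol\pi_k\in\Theta$ put the true parameters inside the feasible region of the supremum defining $\overline{W}_{kt}$, so $W_{kt}=W_{kt}(\rho_k,\boldsymbol\pi_k)\leq \overline{W}_{kt}$ almost surely. Combining the two ingredients, let $\{k_1,\dots,k_n\}$ denote the undetected set produced by Algorithm~\ref{alg:one-step-unknown}; step~3 of the algorithm enforces $\sum_{i=1}^n\overline{W}_{k_i,t}/n\leq\alpha$ when $n\geq 1$, so
\begin{equation*}
R(D_t\vert\mathcal F_t)=\frac{1}{n}\sum_{i=1}^n W_{k_i,t}\leq \frac{1}{n}\sum_{i=1}^n \overline{W}_{k_i,t}\leq \alpha,
\end{equation*}
while the corner case $n=0$ gives an empty numerator and $R(D_t\vert\mathcal F_t)=0\leq\alpha$ trivially.

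The only technical care I would exercise is in verifying that $\overline{W}_{kt}$ is $\mathcal F_t$-measurable and that the supremum is indeed an upper bound for $W_{kt}(\rho_k,\boldsymbol\pi_k)$; both follow from joint measurability and continuity of $(\rho,\boldsymbol\pi)\mapsto W_{kt}(\rho,\boldsymbol\pi)$ — readily visible from the updating formulas of Proposition~\ref{prop:update-post} — together with compactness of $\Theta$ and boundedness of $(0,\overline{\rho}]$. Unlike Theorem~\ref{thm:uniform-optimality}, no coupling or stochastic ordering arguments are needed; the heart of the proof is the one-sided conservativeness inequality $W_{kt}\leq\overline{W}_{kt}$, so I do not anticipate any substantial obstacle.
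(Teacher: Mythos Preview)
Your proposal is correct and follows essentially the same approach as the paper: write $R(D_t\vert\mathcal F_t)$ as the average of the true posteriors $W_{kt}$ over the undetected set, bound each $W_{kt}(\rho_k,\boldsymbol\pi_k)$ by $\overline{W}_{kt}$ using the hypothesis that the true parameters lie in the feasible region, and then invoke the same averaging argument as in Proposition~\ref{prop:one-step}. The paper's proof is in fact terser than yours, omitting the measurability check and the $n=0$ corner case you spell out.
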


It remains to find a way to compute $\overline{W}_{kt}$, as it is defined by an optimization over an iteratively defined object.
Proposition~\ref{prop:w-extended} below provides guidance to this problem.

\begin{proposition}\label{prop:w-extended}
Let $e_{kt} = \sum_{i=1}^t 1_{\{k\in S_i^*\}}$ be the number of exposures of item $k$ up to time $t$.
	Define $U_{kt}(\overline{\rho},\boldsymbol\pi)$ according to the following iterations,
	\begin{equation}\label{eq:iteration}
 \text{ if } e_{kt} \leq 1,			U_{kt}(\overline{\rho},\boldsymbol\pi)=0,
\mbox{~else~}
		U_{kt}(\overline{\rho},\boldsymbol\pi)=		
		\begin{cases}
			U_{k,t-1}(\overline{\rho},\boldsymbol\pi) \text{ if } k \in S_t\setminus S_t^*,\\
			\frac{1}{1-\bar\rho}(1+U_{k,t-1}(\overline{\rho},\boldsymbol\pi))\frac{h_{kt}(X_{kt}\vert \boldsymbol\pi)}{p_{kt}(X_{kt})} \text{ if }k \in  S_t^*.
		\end{cases}
\end{equation}
Let $\overline{R}_{kt}= \sup_{\boldsymbol\pi\in \Theta} U_{kt}(\overline{\rho},\boldsymbol\pi)$. Then, $\overline{W}_{kt}=\frac{\overline{R}_{kt}}{\overline{R}_{kt}+1/\overline{\rho}}$.
\end{proposition}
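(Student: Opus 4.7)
My plan is to reduce the two-parameter supremum defining $\overline{W}_{kt}$ to a one-parameter supremum at $\rho=\bar\rho$ by exploiting monotonicity in $\rho$, and then to pull the remaining supremum through a continuous increasing transformation. Starting from Proposition~\ref{prop:update-post}, $W_{kt}(\rho,\boldsymbol\pi) = U_{kt}(\rho,\boldsymbol\pi)/(U_{kt}(\rho,\boldsymbol\pi)+1/\rho)$, I would first rewrite this as
\begin{equation*}
W_{kt}(\rho,\boldsymbol\pi) \;=\; \frac{\rho\, U_{kt}(\rho,\boldsymbol\pi)}{1+\rho\, U_{kt}(\rho,\boldsymbol\pi)},
\end{equation*}
so that $W_{kt}$ is a continuous, non-decreasing function of the single scalar quantity $\rho\,U_{kt}(\rho,\boldsymbol\pi)$.

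The key technical step is to prove by induction on $t$ that, for each fixed $\boldsymbol\pi\in\Theta$ and each realization of the data, the map $\rho\mapsto U_{kt}(\rho,\boldsymbol\pi)$ is non-decreasing on $(0,1)$. The base case where $e_{kt}\leq 1$ is trivial since $U_{kt}\equiv 0$. For the inductive step, either $k\in S_t\setminus S_t^*$ and the claim carries over directly from $U_{k,t-1}$, or $k\in S_t^*$ and the update multiplies $(1+U_{k,t-1}(\rho,\boldsymbol\pi))\,h_{kt}(X_{kt}\mid\boldsymbol\pi)/p_{kt}(X_{kt})$ by $1/(1-\rho)$. All three factors are non-negative; the likelihood-ratio factor is $\rho$-free; the term $(1+U_{k,t-1}(\rho,\boldsymbol\pi))$ is non-decreasing in $\rho$ by the induction hypothesis; and $1/(1-\rho)$ is strictly increasing in $\rho$ on $(0,1)$. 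Hence their product is non-decreasing in $\rho$. Combining this with the fact that $\rho$ itself is increasing, $\rho\,U_{kt}(\rho,\boldsymbol\pi)$ is non-decreasing in $\rho$, so the supremum over $\rho\in(0,\bar\rho]$ is attained at $\rho=\bar\rho$, reducing the problem to
\begin{equation*}
\overline{W}_{kt} \;=\; \sup_{\boldsymbol\pi\in\Theta} \frac{U_{kt}(\bar\rho,\boldsymbol\pi)}{U_{kt}(\bar\rho,\boldsymbol\pi)+1/\bar\rho}.
\end{equation*}

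Finally, since the map $u\mapsto u/(u+1/\bar\rho)$ is continuous and non-decreasing on $[0,\infty)$, it commutes with the supremum, yielding $\overline{W}_{kt} = \overline{R}_{kt}/(\overline{R}_{kt}+1/\bar\rho)$ with $\overline{R}_{kt}=\sup_{\boldsymbol\pi\in\Theta}U_{kt}(\bar\rho,\boldsymbol\pi)$, as claimed. The main obstacle I expect is the inductive monotonicity in $\rho$: I must verify that the two distinct sources of $\rho$-dependence in the recursion (the explicit $1/(1-\rho)$ prefactor and the implicit dependence carried through $U_{k,t-1}$) both act in the same direction at every step, and that no subtle cancellation can be introduced by the data-dependent likelihood ratio; this is handled by noting that the likelihood ratio does not involve $\rho$ and is non-negative, so multiplying two non-negative non-decreasing functions by a $\rho$-free non-negative factor preserves monotonicity. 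Everything else is mechanical.
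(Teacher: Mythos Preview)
Your proposal is correct and follows essentially the same argument as the paper: establish that $U_{kt}(\rho,\boldsymbol\pi)$ is non-decreasing in $\rho$, conclude that $W_{kt}$ is non-decreasing in $\rho$ so the supremum over $\rho\in(0,\bar\rho]$ is attained at $\bar\rho$, and then push the remaining supremum over $\boldsymbol\pi$ through the monotone map $u\mapsto u/(u+1/\bar\rho)$. The only cosmetic difference is that the paper reads off the monotonicity in $\rho$ directly from the closed-form expression~\eqref{eq:ukt-formula} for $U_{kt}$ established in Proposition~\ref{prop:update-post}, whereas you prove the same monotonicity by induction on the recursion; both routes are valid and of comparable length.
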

\yc{It can be shown that $W_{kt}(\rho,\boldsymbol\pi)$ is monotone increasing with respect to $\rho$. Therefore, to obtain $\overline{W}_{kt}$, Proposition~\ref{prop:w-extended} plugs $\overline{\rho}$ into $W_{kt}(\rho,\boldsymbol\pi)$.}
When the dimension of $\boldsymbol\pi$ is very low (e.g., one or two),
we can discretize the set $\Theta$ by grid points, update $U_{kt}(\overline{\rho},\boldsymbol\pi)$ on the grid points in parallel, and then approximate $\overline{W}_{kt}$ accordingly. When the number of parameters in $\boldsymbol\pi$ is not very low, by making use of \eqref{eq:iteration}, the gradient of $U_{kt}(\overline{\rho},\boldsymbol\pi)$ with respect to $\boldsymbol\pi$ can be computed iteratively. Thus,  $\overline{R}_{kt}$ can be computed, for example, by a gradient ascent algorithm.

\section{Monitoring Statistic in IRT-based Testing}\label{sec:stat}
In principle, the monitoring statistic $X_{kt}$ can be any statistic whose distribution is different before and after the change point. In practice, we suggest to choose $X_{kt}$ to be a Wald-type statistic, so that we can approximate
the pre- and post-change distributions by normal distributions to simplify the computation.  In what follows, we give an example of such a monitoring statistic and explain how confounding due to changing examinee population is adjusted in this statistic.


\subsection{Standardized Item Residual Statistic}

Let $N_t$ be the number of people taking the test at time $t$. Let $Y_{ktn} \in \{0,1\}$ denote the $n$th examinee's response to item $k$ at time $t$, where $Y_{ktn} = 1$ indicates a correct response and $Y_{ktn} = 0$ otherwise.

Let $\bar{Y}_{kt}=(\sum_{n=1}^{N_t} Y_{ktn})/N_t$ be the percent correct for item $k$ at time $t$.
Suppose that a change has not yet occurred. Then $\bar{Y}_{kt}$ has expected value
$\xi^0_{kt} := E\big(\bar{Y}_{kt}|\tau_k\geq t\big)$. If $\xi^0_{kt}$ is known and let $SE(\bar{Y}_{kt})$ be the standard error of $\bar{Y}_{kt}$,
then the standardized residual $(\bar{Y}_{kt} - \xi^0_{kt})/SE(\bar{Y}_{kt})$ is approximately standard normal when the change has not yet occurred and $N_t$ is sufficiently large. Note that this statistic can adjust for the examinee population change when $\xi^0_{kt}$ is defined under the IRT framework.

In practice, we typically do not know $\xi^0_{kt}$. Specially, when the examinee population changes overtime, $\xi^0_{kt}$ needs to be estimated based on both historical information and data from the $t$th test administration. Now suppose that we have a consistent estimator
of $\xi^0_{kt}$, denoted as  $\hat \xi^0_{kt}$. The way obtaining $\hat \xi^0_{kt}$ will be discussed in Section~\ref{subsec:IRT}.
Then the Standardized Item Residual (SIR) statistic is defined as
$$X_{kt} = \frac{\bar{Y}_{kt}- \hat \xi^0_{kt}}{SE(\bar{Y}_{kt}- \hat \xi^0_{kt})},$$
where $SE(\bar{Y}_{kt}- \hat \xi^0_{kt})$ is the standard error of the numerator.
Under mild conditions and given that $\tau_k \geq t$,
$X_{kt}$ is approximately
standard normal for sufficiently large $N_t$. Similarly,  given that $\tau_k < t$,
$X_{kt}$ is asymptotically $N(\mu_{kt}, 1)$, where $\mu_{kt}$ characterizes the mean change of the SIR statistic.
\yc{The normal approximation can be justified in sense that $X_{kt}$ can be decomposed as $X_{kt} =\mu_{kt} + Z + o_p(1)$, which $Z$ is a standard normal random variable. This holds even when $\mu_{kt}$ diverges.
When focusing on change points due to item leakage, it is reasonable to further assume that  $\mu_{kt} >0$.}

In general, the SIR statistics $X_{kt}$, $k \in S_t$, are not independent as assumed in our Bayesian change point model.
Specifically, the SIR statistics constructed under an IRT model tend to have weak positive correlations, brought by individual-specific latent factors.
As shown in Section~\ref{sec:sim},
with such dependence, the proposed method still controls the compound risk.


\subsection{SIR Statistic under IRT Framework}\label{subsec:IRT}

In what follows, we describe an IRT model for item response data $Y_{ktn}$. Under this model, the SIR statistic
$X_{kt}$ can be computed and the mean change $\mu_{kt}$ can be expressed as a function of the parameters in the IRT model.

%


\paragraph{IRT model for pre-change items.} IRT provides a popular method in educational testing for linking different test administrations with potentially different examinee populations. In the current context, it is sensible to model item responses using a unidimensional IRT model for pre-change items.
A unidimentional IRT model assumes that each item  is characterized by one or multiple parameters that do not change over time, denoted by $\bbb_k$, and that each examinee
$n$ in test administration $t$
is characterized by one parameter, denoted by $\theta_{tn}$. The parameter $\theta_{tn}$ is typically interpreted as the ability of the examinee.

Under this model, the probability of an examinee answering item $k$ correctly is completely determined by the item parameters and the person parameter in the form
$$P(Y_{ktn} = 1 \vert \theta_{tn}, \bbb_k) = f(\theta_{tn} \vert \bbb_k),$$
where $f$ is a pre-specified inverse-link function that is monotonically increasing in $\theta_{tn}$. For example, one of the most commonly used models in educational testing is the so-called two-parameter logistic (2PL) model \citep{birnbaum1968some}. Under the 2PL model,
$$f(\theta_{tn} \vert \bbb_k) = \frac{\exp(\beta_{k0} + \beta_{k1} \theta_{tn})}{1+\exp(\beta_{k0} + \beta_{k1} \theta_{tn})},$$
where  $\beta_{k0}$ and $\beta_{k1} > 0$ are known as the easiness and discrimination parameters, respectively, and $\bbb_k = (\beta_{k0}, \beta_{k1})$. Given the person and item parameters, an examinee's responses to different items are assumed to be independent, known as the local independence assumption.

In this paper, the item parameters $\bbb_k$ are treated as fixed parameters that do not change over time. The person parameters $\theta_{tn}$ are treated as random variables.
Specifically, $\theta_{t1}, ..., \theta_{t,N_{t}}$ are assumed to be i.i.d. samples from a distribution $N(m_t, 1)$, where the mean is time dependent to reflect the population change over time (e.g., seasonal effect)  and the variance is fixed to be one to bypass the scale-indeterminacy.
Under the IRT model, the expected percent correct can be calculated as
\begin{equation}\label{eq:xi}
\xi^0_{kt} = \int f(\theta \vert \bbb_k) \frac{1}{\sqrt{2\pi}} \exp(-(\theta - m_t)^2/2) d\theta.
\end{equation}

\paragraph{IRT model for post-change items.}  We now describe an IRT model for response data involving item preknowledge.
The same model has been adopted in \cite{lee2016test} who developed
CUSUM statistics based on this model to monitor item performance and detect item
preknowledge.
For each $k$ satisfying $t > \tau_k$,
we use $\eta_{ktn} \in \{0,1\}$ to denote preknowlege about the item. That is, $\eta_{ktn}  = 1$ indicates that the examinee has preknowledge about the item, and $\eta_{ktn}  = 0$ otherwise. We assume that
the indicators $\eta_{ktn}$ are i.i.d., following a Bernoulli distribution with $P(\eta_{ktn} = 1) = \pi_k$, and $\eta_{ktn}$
is independent of $\theta_{tn}$.
That is, whether an individual has preknowlege about a leaked item is independent of his/her ability.
We further assume that when examinee $n$ has preknowledge about item $k$, his/her answer is always correct. That is,
$Y_{ktn} = 1$, given that $\eta_{ktn} = 1$. Finally, it is assumed that when  examinee $n$ does not cheat on item $k$,
i.e., $\eta_{ktn} = 0$, then $Y_{ktn}$ given $\eta_{ktn}$ and $\theta_{tn}$ still follows the same IRT model as if the item has not changed. This post-change model yields
$\xi^1_{kt} = (1-\pi_k)\xi^0_{kt} + \pi_k$ and therefore $\xi^1_{kt} - \xi^0_{kt} = \pi_k(1-\xi^0_{kt})$.

\paragraph{Estimation of $\xi^0_{kt}$.} 
In practice, the expected pre-change percent correct $\xi^0_{kt}$ is unknown due to the unknown
mean $m_t$ which needs to be estimated based on data from the pre-change items in the
$t$th test administration.
Suppose that there exists a non-empty subset $S^\dagger_t \subset S_t^*$ that is known to only contain pre-change items. For example, $S^\dagger_t$ can be the items that have not been exposed before. Then $m_t$ can be consistently estimated by maximizing the marginal likelihood
\begin{equation}\label{eq:mle}
\hat m_t = \argmax_{m_t} \sum_{n=1}^{N_t} \log\left(\int  \prod_{k\in S_t^\dagger} f(\theta \vert \bbb_k)^{Y_{ktn}} (1-f(\theta \vert \bbb_k))^{1-Y_{ktn}} \frac{1}{\sqrt{2\pi}} \exp(-(\theta - m_t)^2/2) d\theta\right).
\end{equation}
Accordingly, $\xi^0_{kt}$ can be estimated by plugging in $\hat m_t$. The standard error $SE(\bar{Y}_{kt}- \hat \xi^0_{kt})$ can also be computed easily; see the details in Appendix~B.

%

\paragraph{Pre- and post-change distributions of SIR statistic.}
The SIR statistic $X_{kt}$ constructed above is approximately standard normal when $\tau_k \geq t$, and is approximately normal
$N(\mu_{kt},1)$ when $\tau_k < t$, where
$$\mu_{kt} = \frac{\xi^1_{kt} - \xi^0_{kt}}{\sqrt{Var(\bar{Y}_{kt}- \hat \xi^0_{kt})}}.$$
The value of $\xi^1_{kt}$ is determined by the post-change model introduced above and the value of  $\xi^0_{kt}$ is determined by the pre-change model that can be estimated from data.
Given the leakage proportion $\pi_k$, $\mu_{kt}$ can be approximated by
$$\hat \mu_{kt}(\pi_k) = \frac{\pi_k(1-\hat\xi^0_{kt})}{{SE(\bar{Y}_{kt}- \hat \xi^0_{kt})}}.$$

In practice,  $\pi_k$ is usually unknown and hard to estimate, though it may be known by prior knowledge that
$\pi_k$s locate in a certain interval $\Theta$. With such information, we can run Algorithm~\ref{alg:one-step-unknown} with
\begin{equation}\label{eq:postest}
h_{kt}(x\vert \pi_k) = \frac{1}{\sqrt{2\pi}} \exp\left(-\frac{(x - \hat \mu_{kt}(\pi_k))^2}{2}\right).
\end{equation}

\section{Simulation Study}\label{sec:sim}

\subsection{Study I}

\paragraph{Known change point model.}
We start with a simple simulation setting to illustrate the proposed method.
We consider an item pool originally containing $|S_1| = 500$ items. During the process, once a subset of items are detected, they will be removed, and the same number of new items will be added to ensure $|S_t| = 500$ for all $t$. We also assume that 50 items are randomly selected from the item pool for each test administration, i.e., $|S_t^*| = 50$.

The parameter $\rho_k$ in the change time distribution is generated from a uniform distribution over the interval $[0, 0.1]$ for different $k$. It is further assumed that the monitoring statistic $X_{kt}$ follows $N(0, 1)$ when $\tau_k \geq t$, and follows $N(\mu_k, 1)$ when $\tau_k < t$. We generate $\mu_k$ from a uniform distribution over the interval $[1,2]$ for different $k$.

We investigate the situation when $\rho_k$ and $\mu_k$ are known. We run 1000 independent simulations. In each simulation, we apply Algorithm~\ref{alg:one-step}, for $t = 1, ..., 50$, where the threshold $\alpha$ for the compound risk is set to be 0.01. To evaluate the method, three metrics are calculated at each time $t$, including (a) the  FNP  $$\frac{\sum_{k\in S_t\setminus D_t} 1_{\{\tau_k  < t\}}}{\max\{1,|S_t \setminus D_t|\}},$$ (b) the  FDP
$$\frac{\sum_{k\in D_t} 1_{\{\tau_k  \geq t\}}}{\max\{1,|D_t|\}},$$
and (c) the number of detections $|D_t|$. Our results are shown in Figure~\ref{fig:sim1_1}, where panels (a)--(c) show the three metrics, respectively. In each panel, the $x$-axis shows time $t$ and the $y$-axis shows the
5\%, 25\%, 50\%, 75\%, and 95\% quantiles of the empirical distribution of the metric based on 1000 independent simulations.

We take a closer look at the medians of these metrics over time.
The median FNP is zero at the beginning, because
at time $t = 1$ all the items are new (i.e., never exposed before).
It increases as time goes on and stabilizes at the targeted level  0.01 after about 10 time points.
The median FDP is also zero at the beginning but then increases
dramatically. It stabilizes around the level 0.8 after about 20 time points. Finally, the median detection size
also increases with time $t$ and becomes stable around 10 after about 20 time points. 


\begin{figure}
  \centering
  \includegraphics[scale = 0.4]{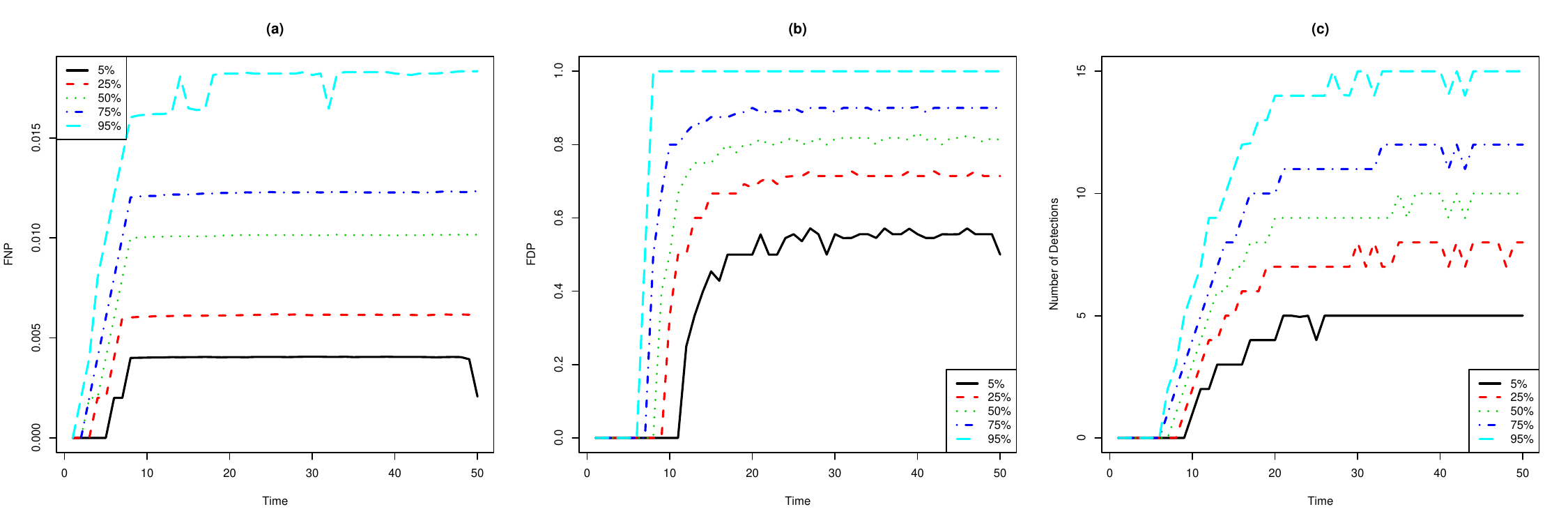}
  \caption{The performance of Algorithm~\ref{alg:one-step} under model correct specification. Panels (a)--(c) show the FNP, FDP, and number of detections, respectively. In each panel, the $x$-axis shows time $t$ and the $y$-axis shows the 5\%, 25\%, 50\%, 75\%, and 95\% quantiles of the empirical distribution of the metric based on 1000 independent simulations. }\label{fig:sim1_1}
\end{figure}

\paragraph{Unknown change point model.} We now look at the situation when
$\rho_k$ and $\mu_k$ are unknown under the same simulation setting as above.
We again run 1000 independent simulations. In each simulation, we apply Algorithm~\ref{alg:one-step-unknown}, for $t = 1, ..., 50$, where the threshold $\alpha=0.01$.
When applying Algorithm~\ref{alg:one-step-unknown}, we only know that the pre-change distribution is $N(0,1)$,
$\rho_k \in (0, 0.1]$ and $\mu_k \in [1,2]$. The results are shown in Figure~\ref{fig:sim1_2} which take a similar form as those given
in Figure~\ref{fig:sim1_1}. As we can see, when the change point model is unknown, the decision given by Algorithm~\ref{alg:one-step-unknown} still controls the FNP under the targeted level. However, when comparing the current results with those from Algorithm~\ref{alg:one-step} above, we see that the decision rule given by Algorithm~\ref{alg:one-step-unknown} is more conservative, which is a price paid for not knowing the parameters in both the geometric distribution for change points and the post-change distribution.  As a result of the conservativeness in the decision (i.e., smaller FNP), the FDP and the number of detections are both larger comparing with the results in Figure~\ref{fig:sim1_1}. 

\begin{figure}
  \centering
  \includegraphics[scale = 0.4]{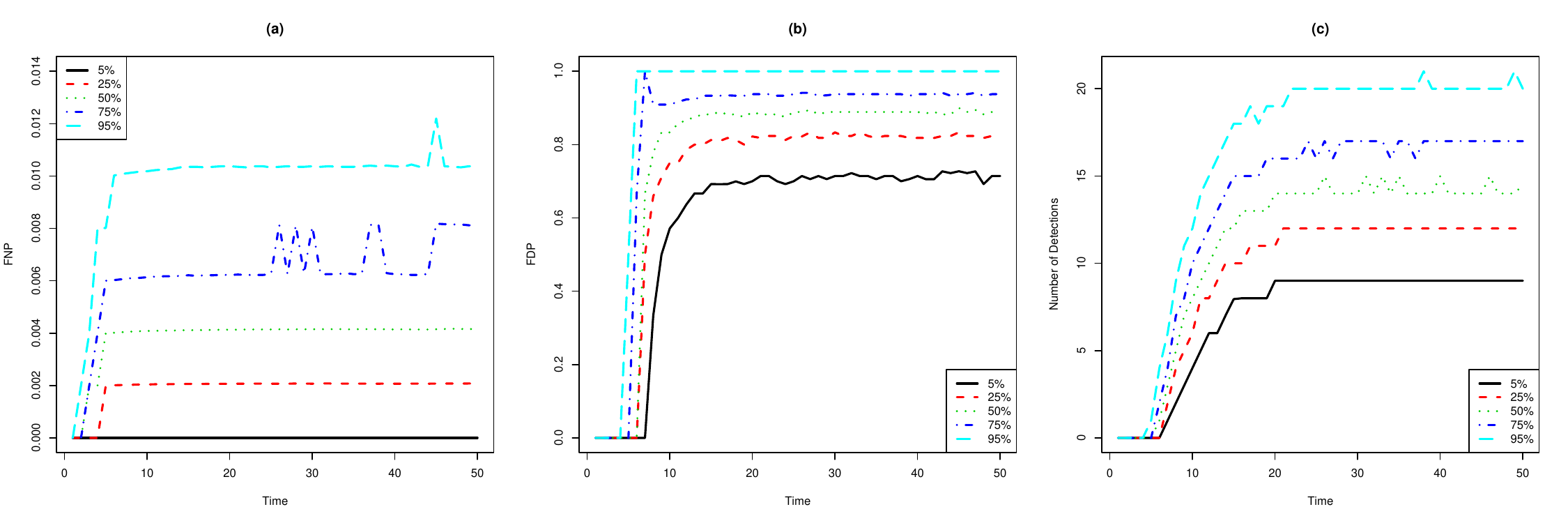}
  \caption{The performance of Algorithm~\ref{alg:one-step-unknown} under model correct specification.  The plots can be interpreted the same as those in Figure~\ref{fig:sim1_1}.}\label{fig:sim1_2}
\end{figure}

\paragraph{Under model misspecification.} We further look at the situation when the change point model is misspecified. Specifically, we consider the case when the monitoring statistics $X_{kt}$, $k\in S^*_t$, are correlated. More precisely, we assume $(X_{kt}: k \in S^*_t)$ given $\tau_k$, $k\in S^*_t$ is multivariate normal distribution, for which the marginal pre-change distribution of $X_{kt}$ is still standard normal and the marginal post-change distribution is $N(\mu_k,1)$, and the covariance between $X_{kt}$ and $X_{k't}$ is 0.1 for $k\neq k'$. The rest of the setting is the same as above.

We apply Algorithm~\ref{alg:one-step}, assuming that $\rho_k$ and $\mu_k$ are known and pretending that the data streams are independent. The results are shown in Figure~\ref{fig:sim1_3}. Comparing the results in Figures~\ref{fig:sim1_1} and \ref{fig:sim1_3}, it seems that the
performance metrics are only slightly affected when we simply run Algorithm~\ref{alg:one-step},
ignoring the weak positive dependence between the data streams.
We further apply Algorithm~\ref{alg:one-step-unknown}, with knowledge that all the $\rho_k$s lie in the interval $(0, 0.1]$ and that all the $\mu_k$s lie in the interval $[1,2]$. Again, when running Algorithm~\ref{alg:one-step-unknown}, we pretend that the data streams are independent.
The results are shown in Figure~\ref{fig:sim1_4}. Similar to the results when given the known model, the effect of ignoring the weak positive dependence in data also seems small when the change point model is not completely known. 

\begin{figure}
  \centering
  \includegraphics[scale = 0.4]{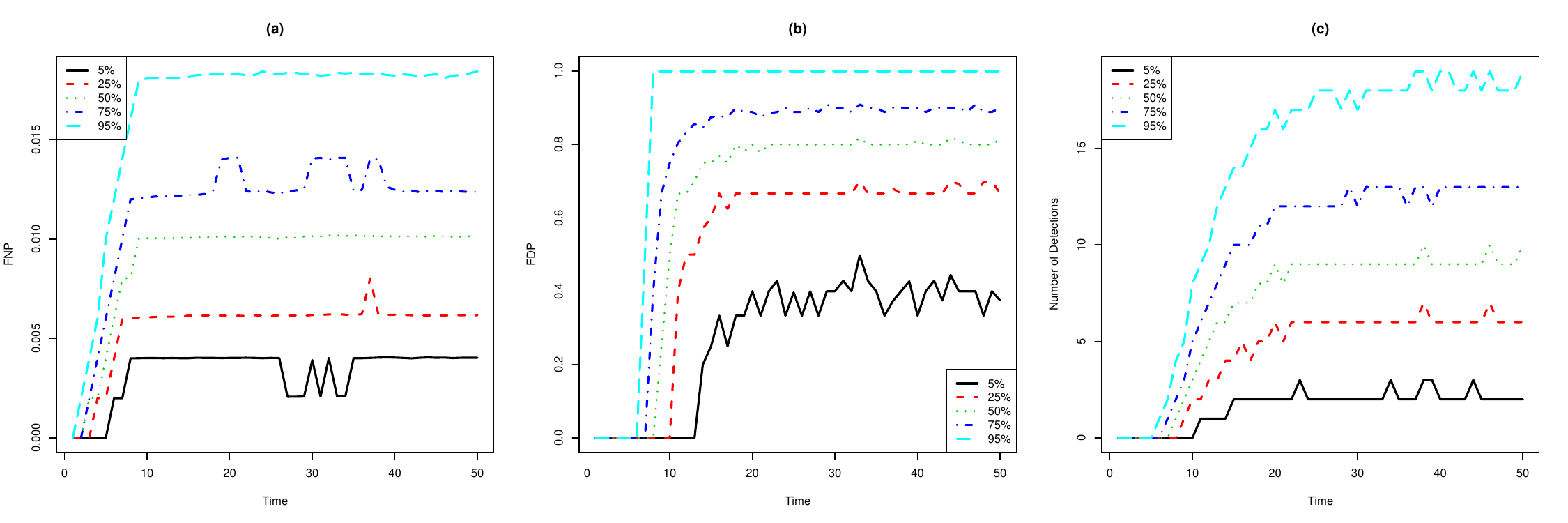}
  \caption{The performance of Algorithm~\ref{alg:one-step} under model mispecification.  The plots can be interpreted the same as those in Figure~\ref{fig:sim1_1}.}\label{fig:sim1_3}
\end{figure}

\begin{figure}
  \centering
  \includegraphics[scale = 0.4]{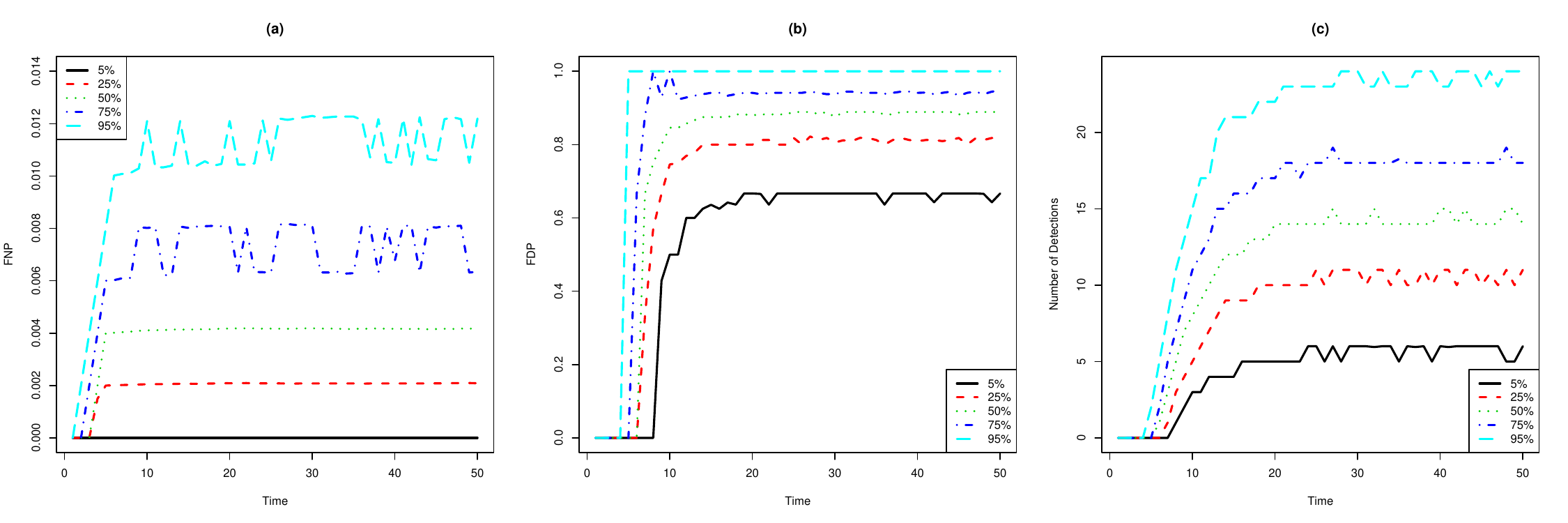}
  \caption{The performance of Algorithm~\ref{alg:one-step-unknown} under model mispecification.  The plots can be interpreted the same as those in Figure~\ref{fig:sim1_1}.}\label{fig:sim1_4}
\end{figure}


\subsection{Study II: Educational Testing with Time-varying Population}\label{subsec:study2}
\paragraph{Simulation setting.} We now evaluate the proposed method under an IRT setting that  mimics operational tests. The setting is almost the same as above, except for the way the monitoring statistics are obtained. More precisely, pre-change item response data are simulated using the 2PL model introduced in Section~\ref{subsec:IRT}. Each item $k$ is assumed to be associated with item parameters $\beta_{k0}$ and $\beta_{k1}$, where the discrimination parameter $\beta_{k1}$ is generated from a uniform distribution $[1, 1.5]$, and the easiness parameter $\beta_{k0}$ is generated from a uniform distribution $[-2, 2]$. We assume that the item parameters are known in our sequential decision procedure, because in practice these parameters can usually be accurately pre-calibrated, before their operational use.
At each time $t$, the number of examinees $N_t$ is generated from a uniform distribution over the set $\{1001, 1002, ..., 3000\}$.
Each examinee $n$ at time $t$ is associated with an ability parameter $\theta_{tn}$, generated from  normal distribution $N(m_t, 1)$, where the population mean $m_t$ is generated from a uniform distribution over the interval $[-0.5, 0.5]$.
The post-change item response data are simulated using the mixture model described in Section~\ref{subsec:IRT}.
The leakage proportion $\pi_{k}$ is generated from a uniform distribution over the interval $[0.05, 0.1]$.

The simulated item pool originally contains $|S_1| = 500$ items. During the process, once a subset of items are detected, they will be removed, and the same number of new items will be added to maintain the size of the item pool. In addition, we add new items to $S_t$ when necessary to ensure that
$S_t$ always contains at least 5 new items that have not been exposed before.
This is because, we include at least 5 new items in
$S_t^*$
for the estimation of $m_t$, the mean of the population ability distribution at time $t$.
The rest of the simulation setting is the same as that of Study I.

We construct an SIR statistic for each data stream using the method introduced in Section~\ref{subsec:IRT}. Both the pre- and post-change distributions are approximated by normal distributions. \yc{We point out that the signal under the current setting is stronger than that under Study 1. In particular, the 25\%, 50\% and 75\% quantiles
of the empirical distribution for $\hat \mu_{kt}(\pi_k)$  are  2.3,  3.9, and  5.1, respectively.}
Two cases are considered. In the first case (Case I), both parameters $\rho_k$ and leakage proportions $\pi_k$ are treated as known. We run Algorithm~\ref{alg:one-step} with the pre-change distribution being standard normal and the post-change distribution being estimated as in \eqref{eq:postest}.
In the second case (Case II), both $\rho_k$ and  $\pi_k$ are treated as unknown, which is usually the case in practice. We assume that we only know these parameters lying in the intervals
$(0, 0.1]$ and $[0.05, 0.1]$, respectively.
We run Algorithm~\ref{alg:one-step-unknown}, with the pre-change distribution being standard normal and the post-change distribution being estimated as in \eqref{eq:postest}.


\paragraph{Results.} 

The results for Case I are presented in Figures~\ref{fig:sim2_1}. As we can see, the proposed method still performs reasonably well under this setting. 
Specifically, the median FNP is slightly larger than the targeted level 0.01 after 20 time points, but it never exceeds 0.013. This slight overshoot is likely due to the normal approximation and the estimation of the population distribution at each time point.
The FDP is quite small, with the median FDP always being zero. This is because, the signal of the change points is quite strong, given the sample sizes and the leakage proportions.  Moreover, the number of detections remains low overtime.
In fact, the median number of detection is always below 3.

The results for Case II are given in Figures~\ref{fig:sim2_2}. Similar to the results of Algorithm~\ref{alg:one-step-unknown} as in Study I, the FNPs in Case II are smaller than those in Case I, meaning that Algorithm~\ref{alg:one-step-unknown} again makes more conservative decisions. Specifically, the median FNP is always  below the 0.004 level.
The FDP values are acceptable, but are  much larger than those in Case I. Specifically, the median FDP is always below 0.73. Finally,
the numbers of detections are still reasonably low, with the median number of detections always below 7. It would be affordable for the testing program to review detected items when detection size is of this scale.


%

\begin{figure}
  \centering
  \includegraphics[scale = 0.4]{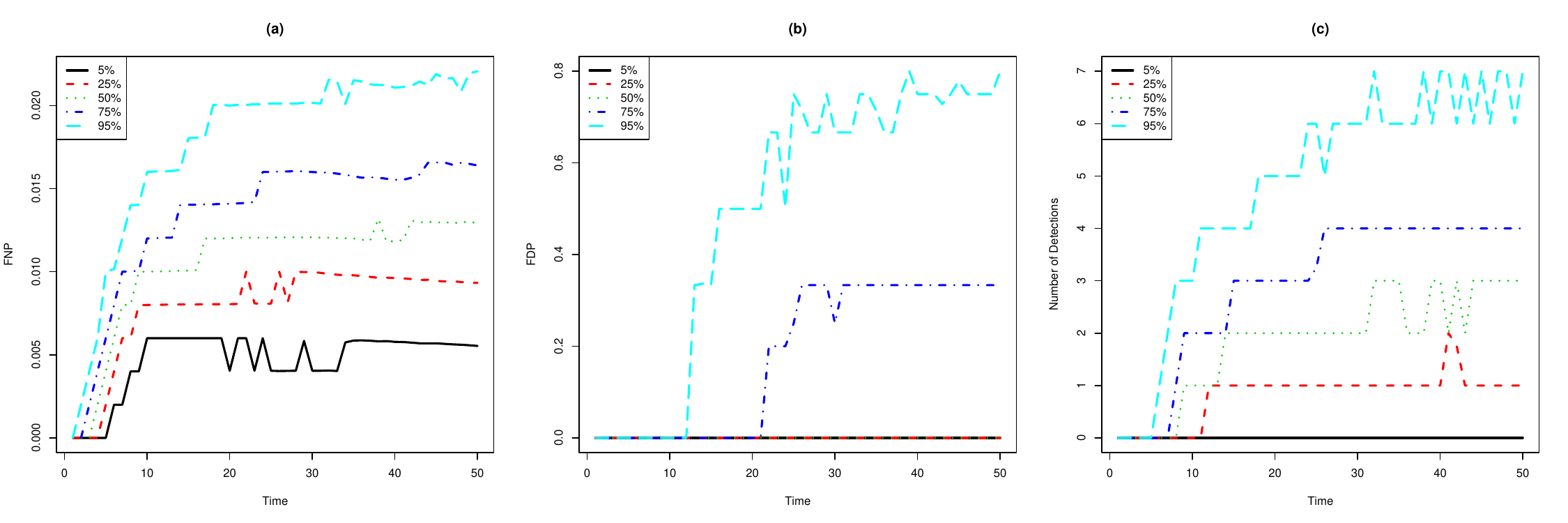}
  \caption{The performance of Algorithm~\ref{alg:one-step} when data are generated from an IRT model
   under a setting that mimics operational tests, and the monitoring statistics are constructed based on item response data.
   The plots can be interpreted the same as those in Figure~\ref{fig:sim1_1}.}\label{fig:sim2_1}
\end{figure}

\begin{figure}
  \centering
  \includegraphics[scale = 0.4]{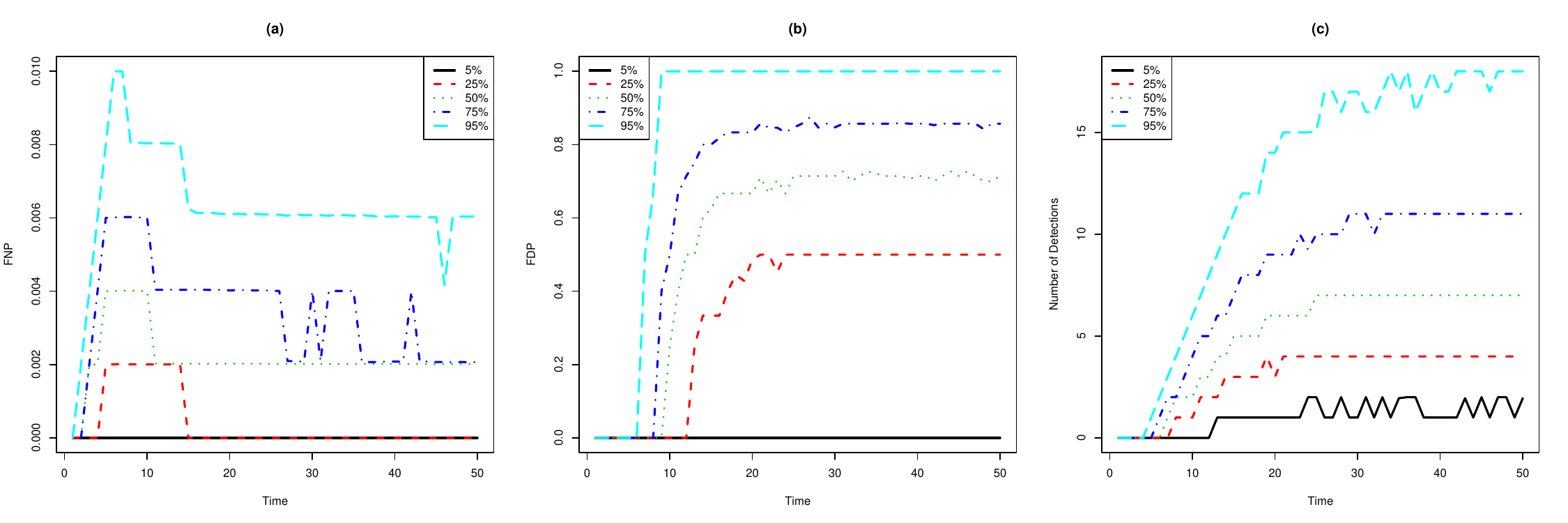}
  \caption{The performance of Algorithm~\ref{alg:one-step-unknown} when data are generated from an IRT model
   under a setting that mimics operational tests, and the monitoring statistics are constructed based on item response data.
   The plots can be interpreted the same as those in Figure~\ref{fig:sim1_1}.}\label{fig:sim2_2}
\end{figure}

\section{Concluding Remarks} \label{sec:conc}

In this paper, we provide a compound change detection framework for sequential item quality  control, one of the most important problems in educational testing. A Bayesian change point model is proposed in both general and specific forms.
Compound decision rules are proposed when the Bayesian change point model is completely known and when only partial information is available about the model. Theoretical properties of these decision rules are established and their empirical performance is evaluated by simulations under various settings. Our simulation studies
show that the proposed method reasonably controls the proportion of post-change items among the undetected ones without making too many false detections of pre-change items, suggesting that the proposed method may be applicable to operational tests for their quality control.
Our simulation results also show that the proposed method is quite robust against model misspecification. More specifically, although our method is developed under a change point model assuming independent monitoring statistics, it still performs well when there
exists weak positive dependence among the monitoring statistics. 

\yc{We clarify that the proposed method controls local FNR but does not control local FDR.
The local FDR in each step of our procedure is a result of the signal of the data and the threshold we use for local FNR. Given a threshold for our local FNR and an IRT setting, there is no simple way to characterize the relationship between the
sample size and the local FDR, as the signal in the data depends on many different factors, not
only the sample size, but also the number of test takers to whom each post-change item is
leaked, population of test takers, the number of items in a test, and the characteristics of the
items in the pool (e.g., the distributions of the discrimination and difficulty parameters). For
a real test, we would suggest to run simulation studies to understand how the local FDR depends
on these factors, given the parameters of the items in the pool and the target level for local FNR.}

{One limitation of the current work lies in the simulation study.
As discussed in Section~\ref{sec:dec}, the proposed method can be generalized to continuous tests, for which
item pool monitoring may be more important than frequent but non-continuous tests. The current simulation only considers settings for frequent but non-continuous tests. Its results do not imply the performance of the proposed method under settings for continuous tests, where the monitoring statistic for each stream is likely obtained from small and varying sample sizes. Simulation studies under real continuous test settings will be conducted in future research.}

Another limitation is that the specific model with independent geometric change points may not be flexible enough. For example, the change points are likely correlated, driven by some events such as the leakage of a set of items to the public or the change of curriculum that may affect multiple items.
A challenge from removing the independent geometric distribution assumption
is that
the posterior probabilities $P(\tau_k < t \vert \mathcal F_t)$ typically do not have an analytic form. Several questions are worth future investigation. First, can we still control the compound risk using a misspecified independent geometric model? Second, can we develop  methods for approximating these posterior probabilities, such as variational approximation and certain Bayesian filters \citep[e.g., Chapter 10,][]{bishop2006pattern}?

In practice, there is always unknown information in the Bayesian change point model, especially the distribution of change points and the post-change distribution. The current solution is to take a conservative approach that makes decision under essentially the worst-case model. This approach guarantees the control of the compound risk  for a finite sample, but there might be a sacrifice in making more false detections of pre-change items.
An alternative is to take an online estimation approach  that estimates the unknown model parameters sequentially together with
the sequential change detection process. \yc{The unknown parameters can be treated as random variables and estimated by a Bayesian approach, or as fixed parameters and estimated by a likelihood-based approach.}
Similar to the multi-armed bandit problem \citep[e.g.,][]{robbins1952some,sutton2018reinforcement}, this approach also faces an exploration–exploitation tradeoff dilemma, i.e., the trade-off between the efforts to achieve a more accurate estimation and to make better decision.
This problem is left for future investigation.


Many sequential change detection applications involve multiple data streams, such as the detection of customer behavior change in e-commence, the detection of changed sensors in engineering, and the detection of abnormal change in stocks.
For such problems, it may be
more sensible to control FDR- or FNR-type compound risks than individual risks for single data streams.
Although this paper focuses on item quality control in educational testing, the proposed methodological framework is very general and applicable to many other multi-stream change detection problems in different fields. In fact, the proposed procedure can be easily modified to control local FDR.

\vspace{2cm}
\appendix

\noindent
{\Large\bf Appendix}

\section{Additional Simulation Results}

In the appendix, we have provided two additional simulation studies. Both studies take the same setting as Study I, except that in the first study (Study A1), we set the threshold for the risk to be 5\% instead of 1\%, and in the second study (Study A2), we sample $\mu_k$ from the uniform distribution over the interval [2,3].
\subsection{Study A1}

The results are given in Figures~\ref{fig:simA1_1} through \ref{fig:simA1_4} that are parallel to Figures \ref{fig:sim1_1} through \ref{fig:sim1_4}, respectively. Note that the only difference between the current study and Study I is that the threshold for the risk is set to be 5\% instead of 1\%.

\begin{figure}
  \centering
  \includegraphics[scale = 0.4]{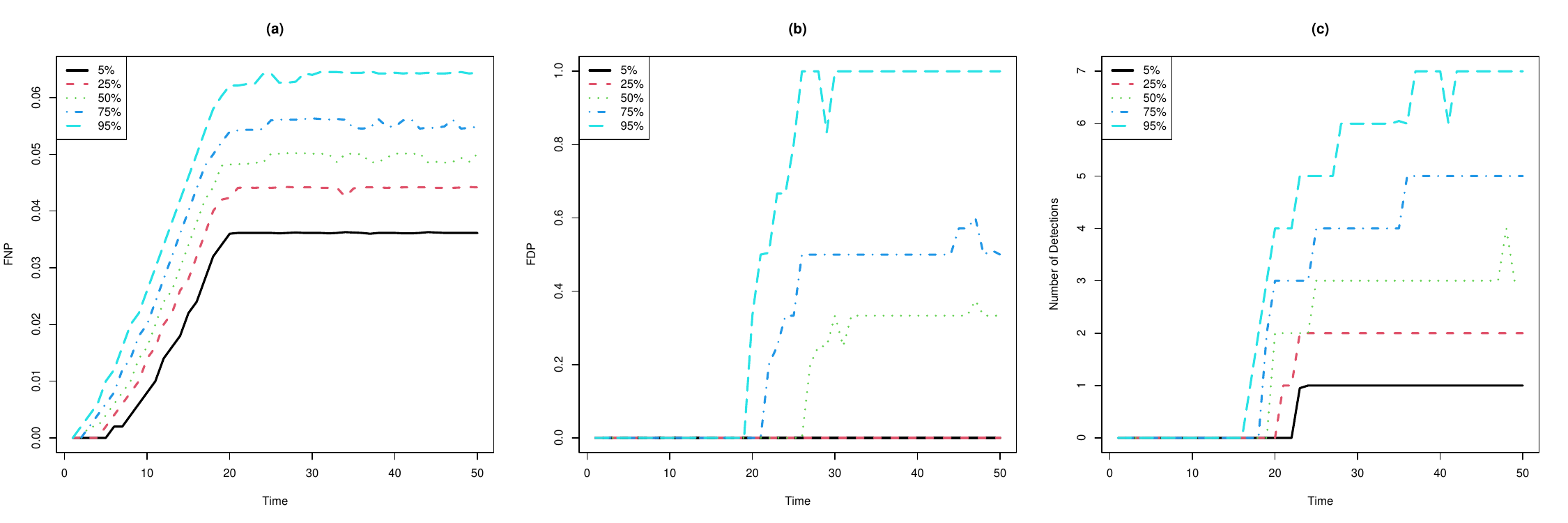}
  \caption{The performance of Algorithm~\ref{alg:one-step} under model correct specification. The plots can be interpreted the same as those in Figure~\ref{fig:sim1_1}.}\label{fig:simA1_1}
\end{figure}

\begin{figure}
  \centering
  \includegraphics[scale = 0.4]{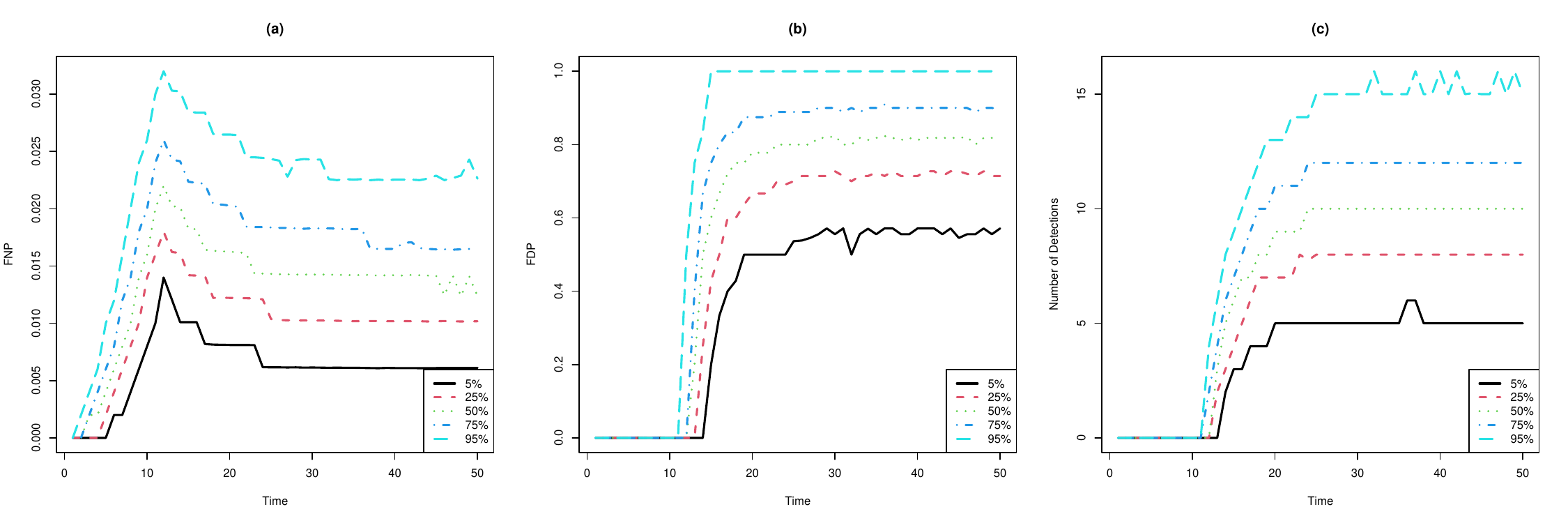}
  \caption{The performance of Algorithm~\ref{alg:one-step-unknown} under model correct specification.  The plots can be interpreted the same as those in Figure~\ref{fig:sim1_1}.}\label{fig:simA1_2}
\end{figure}

\begin{figure}
  \centering
  \includegraphics[scale = 0.4]{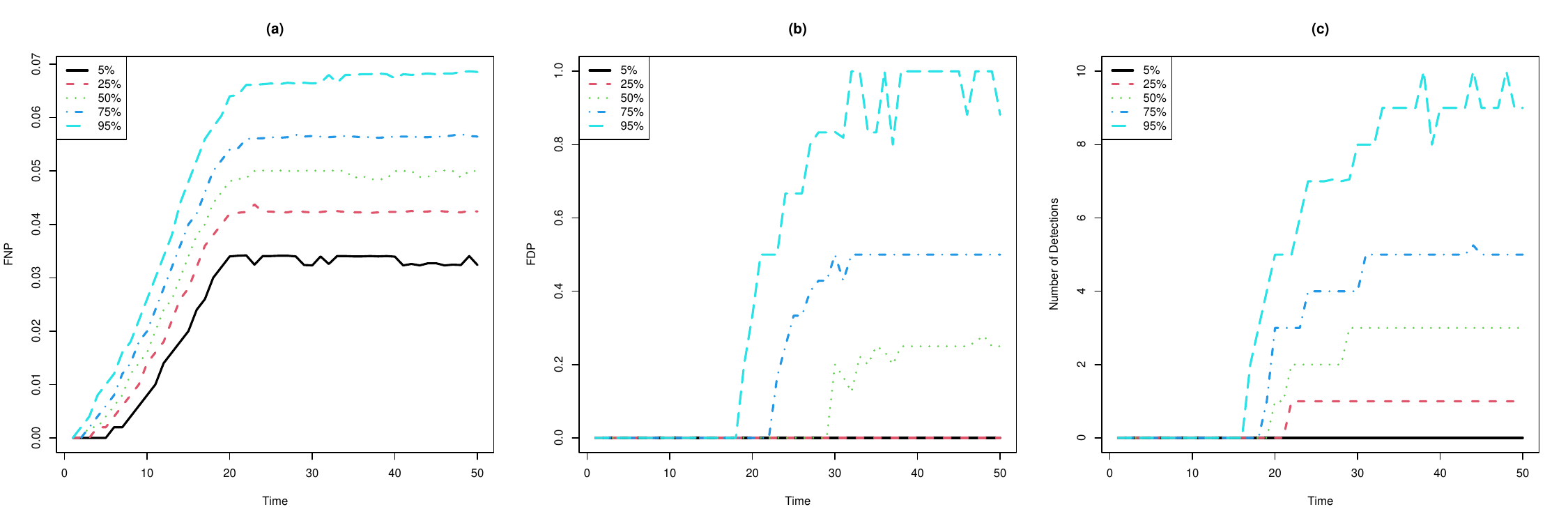}
  \caption{The performance of Algorithm~\ref{alg:one-step} under model mispecification.  The plots can be interpreted the same as those in Figure~\ref{fig:sim1_1}.}\label{fig:simA1_3}
\end{figure}

\begin{figure}
  \centering
  \includegraphics[scale = 0.4]{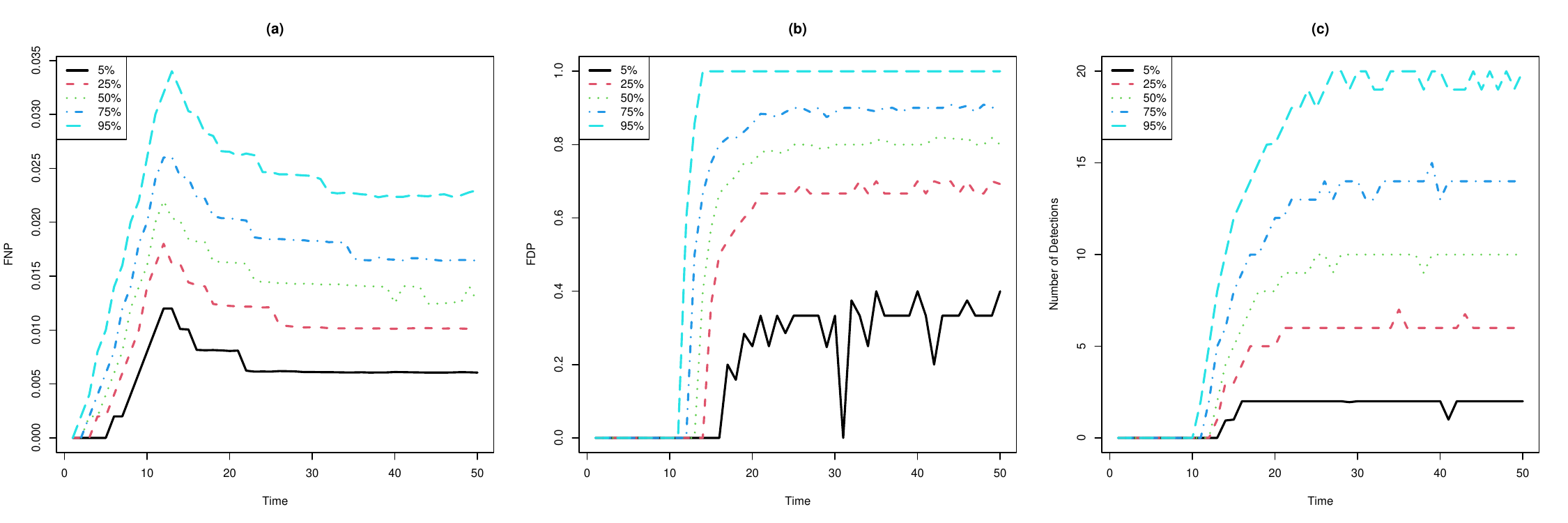}
  \caption{The performance of Algorithm~\ref{alg:one-step-unknown} under model mispecification.  The plots can be interpreted the same as those in Figure~\ref{fig:sim1_1}.}\label{fig:simA1_4}
\end{figure}

\subsection{Study A2}

The results are given in Figures~\ref{fig:simB1_1} through \ref{fig:simB1_4} that are parallel to  Figures \ref{fig:sim1_1} through \ref{fig:sim1_4}, respectively. Note that the only difference between the current study and Study I is that  $\mu_k$ of the post-change distribution is generated from uniform distribution over the interval $[2,3]$ instead of $[1,2]$.

\begin{figure}
  \centering
  \includegraphics[scale = 0.4]{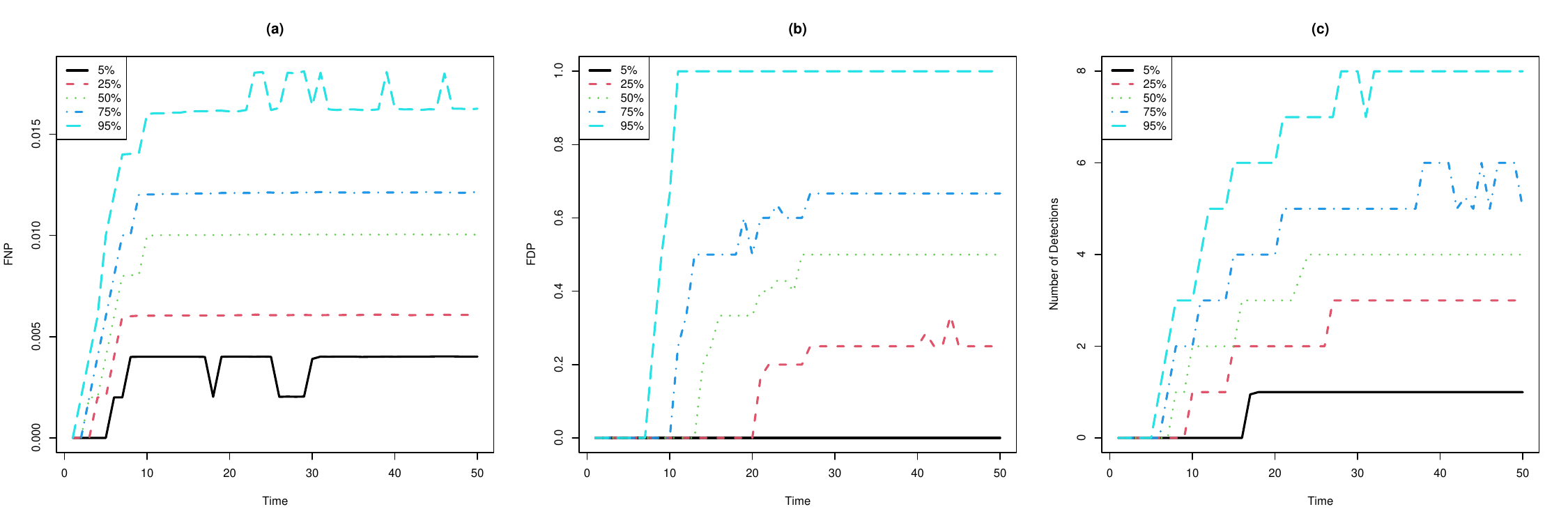}
  \caption{The performance of Algorithm~\ref{alg:one-step} under model correct specification. The plots can be interpreted the same as those in Figure~\ref{fig:sim1_1}.}\label{fig:simB1_1}
\end{figure}

\begin{figure}
  \centering
  \includegraphics[scale = 0.4]{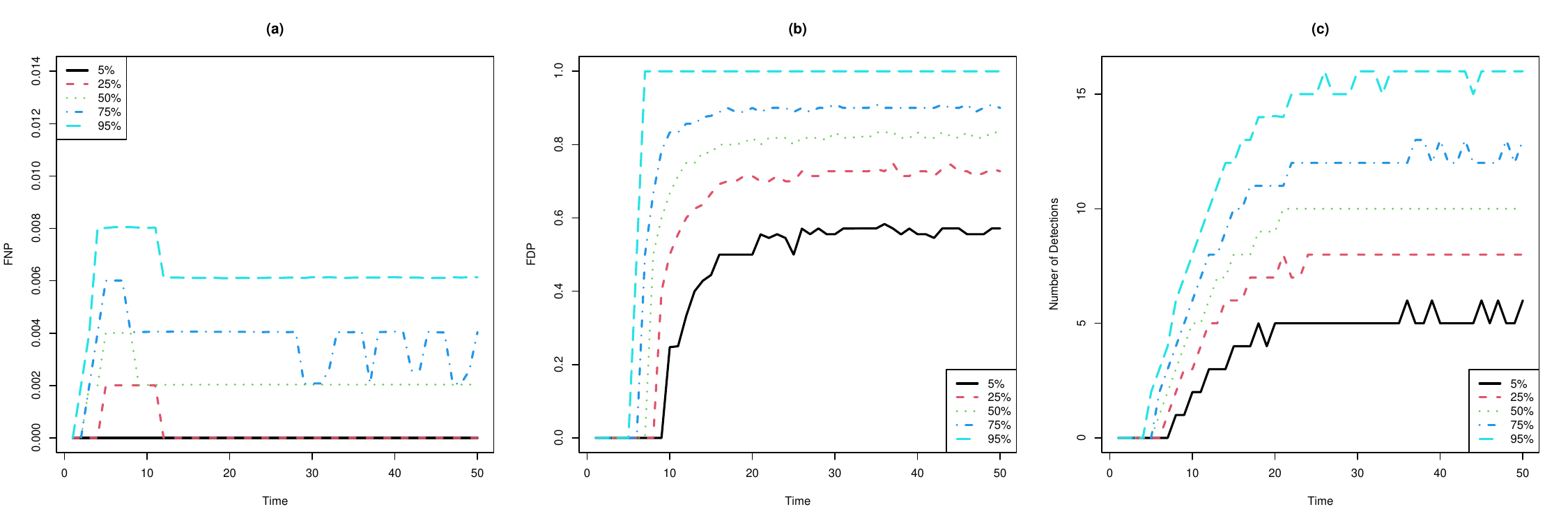}
  \caption{The performance of Algorithm~\ref{alg:one-step-unknown} under model correct specification.  The plots can be interpreted the same as those in Figure~\ref{fig:sim1_1}.}\label{fig:simB1_2}
\end{figure}

\begin{figure}
  \centering
  \includegraphics[scale = 0.4]{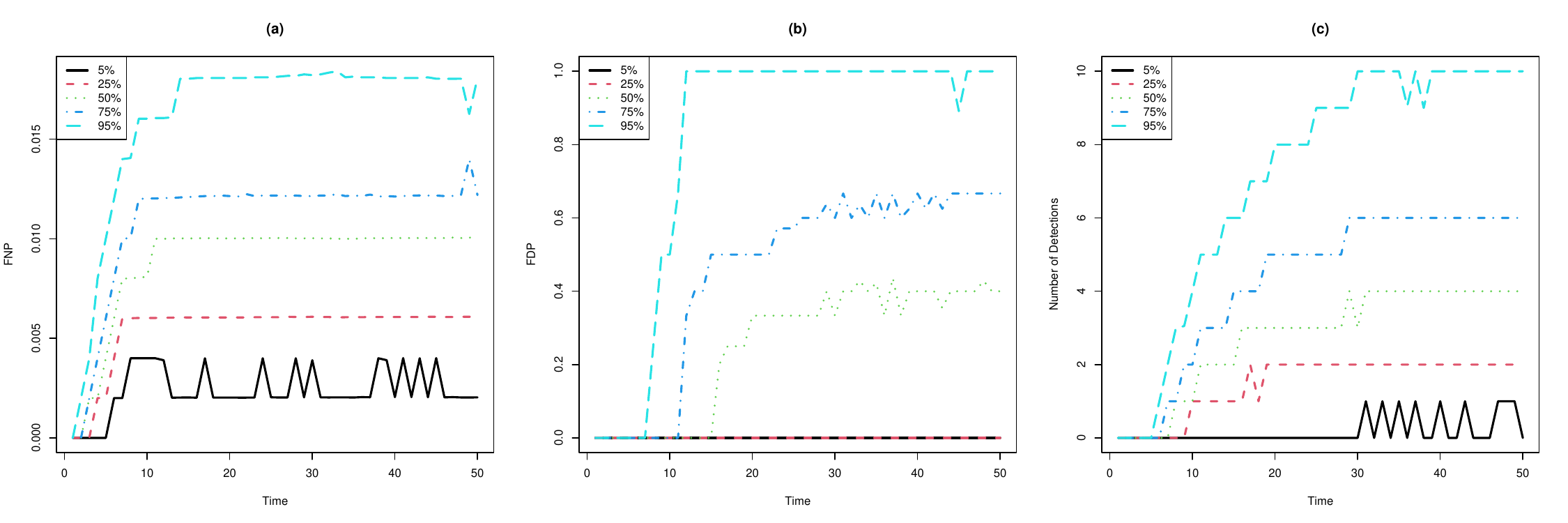}
  \caption{The performance of Algorithm~\ref{alg:one-step} under model mispecification.  The plots can be interpreted the same as those in Figure~\ref{fig:sim1_1}.}\label{fig:simB1_3}
\end{figure}

\begin{figure}
  \centering
  \includegraphics[scale = 0.4]{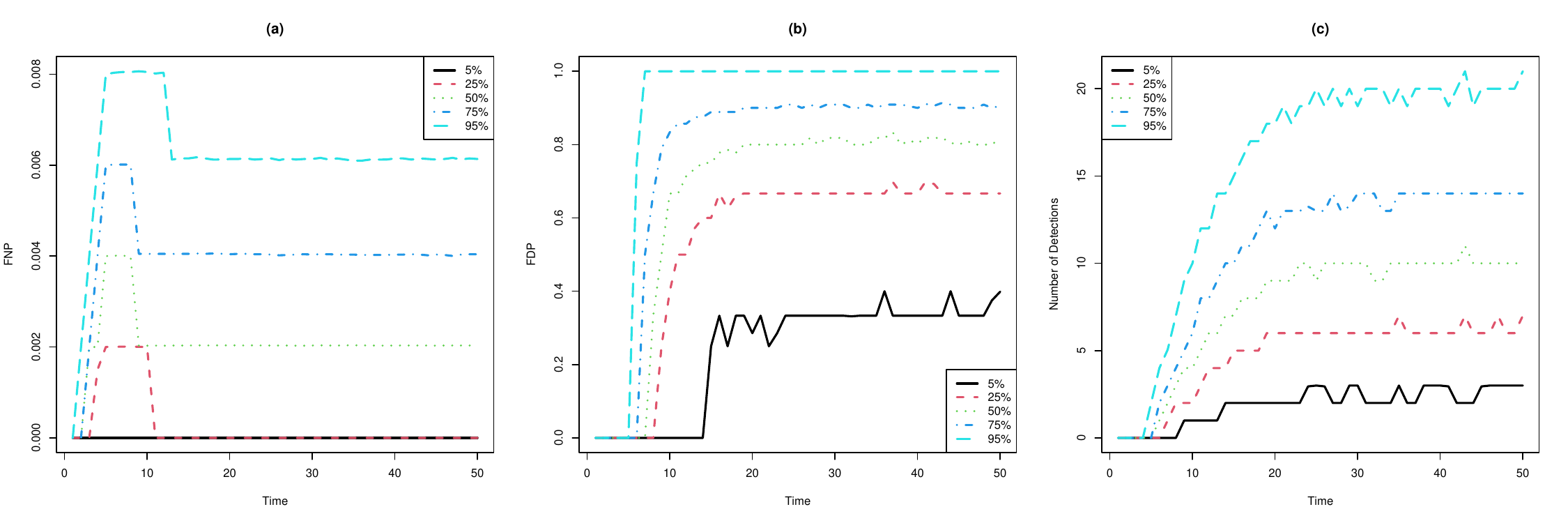}
  \caption{The performance of Algorithm~\ref{alg:one-step-unknown} under model mispecification.  The plots can be interpreted the same as those in Figure~\ref{fig:sim1_1}.}\label{fig:simB1_4}
\end{figure}

\section{Additional Theoretical Results}

The proposed detection rule is not only locally optimal at each time point $t$ in the sense given in Proposition~\ref{prop:one-step}, but also uniformly optimal at all time points for some specific models.
More precisely, consider the following $S_t^*$ obtained by a uniform sampling. For $k\in S_t$, let $1_{\{k\in S_t^*\}}$ be i.i.d. Bernoulli variables with a parameter $\lambda\in(0,1]$ given $S_t$.

\begin{theorem}\label{thm:uniform-optimality}
Assume the set $S_t^*$ is obtained by a uniform sampling described above and no new items are added, {the items are removed once they are detected}, the change points follow the geometric model in \eqref{eq:geo-prior}, and there exists $\rho\in (0,1)$ and density functions $p$ and $q$ such that $\rho_k=\rho$, $p_k=p$ and $q_k=q$ for all $k$.
Then, for any other decision rule $\{D'_t\}_{t\geq 1}$ satisfying {$R(D'_t|\mathcal{F}'_t)  \leq \alpha$ for all $t$,
$E(|D'_t|)\leq E(|D_t|)$ for all $t$. Here, $\{\mathcal{F}'_t\}_{t\geq 1}$ denotes the information filtration associated with $\{D'_t\}_{t\geq 1}$.
}
\end{theorem}
Although the above Theorem~\ref{thm:uniform-optimality} is presented in a relatively simple form, its proof is quite involved, requiring technical tools such as monotone coupling and stochastic ordering over a special non-Euclidean space. The technical derivations are given in Appendix C.
{
\begin{remark}[Conditional risk and unconditional risk]
The optimality property  in Theorem~\ref{thm:uniform-optimality} is established among all the methods controlling the conditional expectation of the risk. In general, these optimality results no longer hold if our goal is instead to control the unconditional risk in each step, because the unconditional risk $E(R(D^*_t|\mathcal F_t))$ may be strictly smaller than $\alpha$. In such cases, it may be possible to detect fewer changes so that the unconditional risk is getting closer to the threshold $\alpha$.   On the other hand, we expect that  $E(R(D^*_t|\mathcal F_t))$ is approaching $\alpha$,  as the number data streams grows and $t$ is large enough. See Theorem 2 in \cite{chen2019compound}  for a rigorous justification of this statement under  additional conditions. Thus, we expect
  the proposed method to be approximately optimal under the unconditional risk as the number of data streams grows to infinity.
\end{remark}
}

\section{Proof of Theoretical Results}
In this section, we provide proof of theoretical results. We will postpone the proof of Theorem~\ref{thm:uniform-optimality} to the end of the section, because it is much more technical than the other theoretical results.

\subsection{Proof of Proposition~\ref{prop:one-step}}
Note that $R(D_t|\mathcal{F}_t) = \frac{\sum_{k\in S_t\setminus D_t} P(\tau_k < t|\mathcal F_t)}{\max\{|S_t \setminus D_t|,1\}} = \frac{\sum_{k\in S_t\setminus D_t} W_{kt}}{\max\{|S_t \setminus D_t|,1\}}$. Comparing $R(D_t|\mathcal{F}_t)$ with steps 1--3 in
 Algorithm~\ref{alg:one-step}, we can see that $ \frac{\sum_{k\in S_t\setminus D_t} W_{kt}}{\max\{|S_t \setminus D_t|,1\}} = V_n\leq \alpha$.
Thus, $R(D_t|\mathcal{F}_t)\leq \alpha$ for all $t$.

Let $D'_t$ be another detection set that is $\mathcal F_t$ measurable and $R(D'_t|\mathcal{F}_t)\leq \alpha$. We first prove that  $|D_t| \leq |D'_t|$. We prove by contradiction.
Recall that according to Algorithm~1, $D_t = S_t \setminus \{k_1, ..., k_n\}$ if $n\geq 1$ and $D_t = S_t$ if $n = 0$, where $n$ is the largest value in $\{0, 1, ..., |S_t|\}$ such that
$V_{n} \leq \alpha.$  Suppose that $|D_t'| < |D_t|$.  Then $|S_t\setminus D_t'| > n$ and
$$\frac{\sum_{k\in S_t\setminus D_t'} W_{kt}}{|S_t \setminus D_t'|} \leq \alpha.$$
It means that there exists $n' > n$ such that $V_{n'} \leq \alpha.$ This contradicts with the construction of $n$.

We now prove
$E\big(\sum_{k\in D_t} 1_{\{\tau_k \geq t\}}\vert \mathcal F_t\big) \leq E\big(\sum_{k\in D'_t} 1_{\{\tau_k \geq t\}} \vert \mathcal F_t\big),$  which further implies $$E\big(\sum_{k\in D_t} 1_{\{\tau_k \geq t\}}\big) \leq E\big(\sum_{k\in D'_t} 1_{\{\tau_k \geq t\}}\big).$$
Note that we only need to prove
$$\sum_{k\in D_t'}(1-W_{kt}) - \sum_{k\in D_t}(1-W_{kt}) = \sum_{k\in D_t'\setminus D_t}(1-W_{kt}) - \sum_{k\in D_t\setminus D_t'}(1-W_{kt})  \geq 0.$$
Since we have proved that $|D_t| \leq |D'_t|$, there are more terms in the summation $\sum_{k\in D_t'\setminus D_t}(1-W_{kt})$ than in $\sum_{k\in D_t\setminus D_t'}(1-W_{kt})$. In addition, by the way $D_t$ is constructed, for any $k \in D_t\setminus D_t'$ and $k' \in D_t'\setminus D_t$, $1-W_{kt} \leq 1-W_{k't}$. Thus, $\sum_{k\in D_t'\setminus D_t}(1-W_{kt}) - \sum_{k\in D_t\setminus D_t'}(1-W_{kt})  \geq 0$ holds. This completes our proof.


\subsection{Proof of Proposition~
\ref{prop:update-post}}
Let $\{l_1,l_2,\cdots\}$ be the indices of time that the item $k$ is used and thus $X_{l_r}$s are obtained. That is, $l_1=\inf\{t:k\in S_t^*\}$, $l_{m+1}=\inf\{t>l_m:k\in S_t^*\}$ for $m=1,2,\cdots$.
It is not hard to verify (e.g., by induction) that $U_{kt}$ satisfy
\begin{equation}\label{eq:ukt-formula}
	U_{kt}= \sum_{s=1}^{e_{kt}-1}\prod_{r=s+1}^{e_{kt}}\big\{\frac{q_{k,l_r}(X_{k,l_r})/p_{k,l_r}(X_{k,l_r})}{1-\rho_k}\big\}.
\end{equation}
On the other hand, the posterior probability is calculated using Bayes formula as follows,
\begin{equation}
	\begin{split}
		W_{kt}&=\frac{\sum_{s=1}^{e_{kt}-1}P(\gamma_k=s)\prod_{r=1}^s p_{k,l_r}(X_{k,l_r})\prod_{r=s+1}^{e_{kt}}q_{k,l_r}(X_{k,l_r})}{\sum_{s=1}^{e_{kt}-1}P(\gamma_k=s)\prod_{r=1}^s p_{k,l_r}(X_{k,l_r})\prod_{r=s+1}^{e_{kt}}q_{k,l_r}(X_{k,l_r})+P(\gamma_k\geq e_{kt}|e_{kt})\prod_{s=1}^{e_{kt}} p_{k,l_s}(X_{l_s})}\\
		&= \frac{\sum_{s=1}^{e_{kt}-1}P(\gamma_k=s)\prod_{r=s+1}^{e_{kt}}\{q_{k,l_r}(X_{k,l_r})/p_{k,l_r}(X_{k,l_r})\}}{\sum_{s=1}^{e_{kt}-1}P(\gamma_k=s)\prod_{r=s+1}^{e_{kt}}\{q_{k,l_r}(X_{k,l_r})/p_{k,l_r}(X_{k,l_r})\}+P(\gamma_k\geq e_{kt}|e_{kt})}.
	\end{split}
\end{equation}
Plug $P(\gamma_k=s)=(1-\rho_k)^{s-1}\rho_k$ and $P(\gamma_k\geq e_{kt}|e_{kt})=(1-\rho_k)^{e_{kt}-1}$ into the above display. Then,
\begin{equation}
\begin{split}
	W_{kt}	&= \frac{\sum_{s=1}^{e_{kt}-1}(1-\rho_k)^{s-1}\rho_k\prod_{r=s+1}^{e_{kt}}\{q_{k,l_r}(X_{k,l_r})/p_{k,l_r}(X_{k,l_r})\}}{\sum_{s=1}^{e_{kt}-1}(1-\rho_k)^{s-1}\rho_k\prod_{r=s+1}^{e_{kt}}\{q_{k,l_r}(X_{k,l_r})/p_{k,l_r}(X_{k,l_r})\}+(1-\rho_k)^{e_{kt}-1}}\\
	& =  \frac{\sum_{s=1}^{e_{kt}-1}(1-\rho_k)^{s-e_{kt}}\prod_{r=s+1}^{e_{kt}}\{q_{k,l_r}(X_{k,l_r})/p_{k,l_r}(X_{k,l_r})\}}{\sum_{s=1}^{e_{kt}-1}(1-\rho_k)^{s-e_{kt}}\prod_{r=s+1}^{e_{kt}}\{q_{k,l_r}(X_{k,l_r})/p_{k,l_r}(X_{k,l_r})\}+1/\rho_{k}},
\end{split}
\end{equation}
which is the same as $\frac{U_{kt}}{U_{kt}+1/\rho_k}$.

\subsection{Proof of Proposition~\ref{prop:w-extended}}
According to \eqref{eq:ukt-formula}, $U_{kt}(\rho_k,\boldsymbol\pi)$ is increasing in $\rho_k$ while fixing $\boldsymbol\pi$. Thus, $W_{kt}(\rho_k,\boldsymbol\pi)=\frac{U_{kt}(\rho_k,\boldsymbol\pi)}{U_{kt}(\rho_k,\boldsymbol\pi)+1/\rho_k}$ is also increasing in $\rho_k$. This implies $\sup_{\boldsymbol\pi\in\Theta}W(\overline{\rho},\boldsymbol\pi) =\sup_{\rho\in[0,\overline{\rho}],\boldsymbol\pi\in\Theta}W(\rho,\boldsymbol\pi)=\overline{W}_{kt}$.
By Proposition~\ref{prop:update-post} and the definition of $\overline{U}_{kt}$, we have $\frac{\overline{U}_{kt}}{\overline{U}_{kt}+1/\overline{\rho}}=\sup_{\boldsymbol\pi\in\Theta}W(\overline{\rho},\boldsymbol\pi)$. Thus, $\overline{W}_{kt}=\sup_{\boldsymbol\pi\in\Theta}W(\overline{\rho},\boldsymbol\pi) =\frac{\overline{U}_{kt}}{\overline{U}_{kt}+1/\overline{\rho}}$.

\subsection{Proof of Theorem~\ref{thm:unknown}}
Note that by definition $R(D_t|\mathcal{F}_t) = \frac{\sum_{k\in S_t\setminus D_t} P(\tau_k < t|\mathcal F_t)}{\max\{|S_t \setminus D_t|,1\}} = \frac{\sum_{k\in S_t\setminus D_t} W_{kt}(\rho_k,\boldsymbol\pi)}{\max\{|S_t \setminus D_t|,1\}}\leq \frac{\sum_{k\in S_t\setminus D_t} \overline{W}_{kt}}{\max\{|S_t \setminus D_t|,1\}}$. The rest of the proof follows similarly to the proof of Proposition~\ref{prop:one-step}.

\subsection{Proof of Theorem~\ref{thm:uniform-optimality}}
We first note that Theorem~\ref{thm:uniform-optimality} under the current settings is an extension of  Theorem~1 in \cite{chen2019compound}. Indeed, Theorem~1 in \cite{chen2019compound} is proved under the model where $S_t^*=S_t$ for all $t$, which corresponds to the uniform sampling with the sampling weight $\lambda=1$. For the current settings, we will follow a similar proof strategy as that of Theorem~1 in \cite{chen2019compound}.
Because the complete proof is  lengthy and quite technical, we will only emphasize the key differences and omit similar details.

The current settings and that of \cite[Theorem 1 and Theorem 5]{chen2019compound} are mainly different at two places. First, the information filtration $\mathcal{F}_t$ is richer under the current setting because of the additional information in $\{S^*_s\}_{s=1,\cdots, t}$. Second, the conditional distribution of $W_{k,t+1}$ given $W_{k,t}$ for $k\in S_{t}$ is different because of the possible difference in $S^*_t$ and $S_t$ and the randomness in the uniform sampling. These differences  affects how we extend the supporting Lemmas \cite[Lemmas D.1 -- D.10]{chen2019compound} to the current settings. We first note that Lemmas D.1 -- D.6 in \cite{chen2019compound} either directly apply to the current settings or can be easily extended as they are not related to the random sampling $\{S^*_{s}\}_{1\leq s\leq t}$. Lemmas D.7 -- D.9 in \cite{chen2019compound} are not well-defined under the current settings because of the random sampling of $\{S^*_{s}\}_{1\leq s\leq t}$ and the missing data induced by it. Nevertheless, by modifying the proof, Lemma D.7  -- D.9 in \cite{chen2019compound} can be replaced and extended to the current settings with the next lemma.

\begin{lemma}[Modified Lemma~D.9 of \cite{chen2019compound}]\label{lemma:modified-d9}
	{Let $\{V_{kt}\}_{t\geq 1}$ be defined the same as $W_{kt}$ with no item removed. That is, $V_{kt}=\frac{U_{kt}}{U_{kt}+1/\rho_k}$ and $\{U_{kt}\}$ evolves according to \eqref{eq:update} by setting $S_t=S_1$ for all $t$.}
Then, $\{V_{kt}\}_{t\geq 1}$ is a homogeneous Markov chain, in addition,  its transition kernel is stochastically monotone. We will later refer to this transition kernel as $K(\cdot,\cdot)$.
\end{lemma}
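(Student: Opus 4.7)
The plan is to verify two claims separately, homogeneous Markovianity of $\{V_{kt}\}_{t\geq 1}$ and stochastic monotonicity of its one-step kernel, by exploiting the memoryless property of the geometric prior on $\gamma_k$ together with the explicit update in Proposition~\ref{prop:update-post}.

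For the Markov part the key step is to show that the one-step predictive distribution of the pair $(1_{\{k\in S_{t+1}^*\}},X_{k,t+1})$ given $\mathcal F_t$ depends on the history only through $V_{kt}$. Under the uniform sampling, $1_{\{k\in S_{t+1}^*\}}\sim\mathrm{Bernoulli}(\lambda)$ is independent of $\mathcal F_t$. Conditional on $\{k\in S_{t+1}^*\}$ the distribution of $X_{k,t+1}$ is a mixture of the pre- and post-change densities with weight $P(\gamma_k\leq e_{kt}\mid\mathcal F_t)$ on $q$. By memorylessness, on the event $\{\gamma_k\geq e_{kt}\}$ the past observations contain no information about $\gamma_k$, so that $P(\gamma_k= e_{kt}\mid\mathcal F_t)=(1-V_{kt})\rho$ and $P(\gamma_k> e_{kt}\mid\mathcal F_t)=(1-V_{kt})(1-\rho)$. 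This yields a predictive density $f_{V_{kt}}(x)=[V_{kt}+(1-V_{kt})\rho]\,q(x)+(1-V_{kt})(1-\rho)\,p(x)$ depending only on $V_{kt}$; combined with the deterministic update for $V_{k,t+1}$ in terms of $V_{kt}$ and the new data, this gives the Markov property. Homogeneity is immediate because $\rho,p,q,\lambda$ are time-invariant in the setting of Theorem~\ref{thm:uniform-optimality}.

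For stochastic monotonicity I will couple the one-step transitions from two starting states $v\leq v'$. Writing $L(x)=q(x)/p(x)$, Proposition~\ref{prop:update-post} can be read as $V_{k,t+1}=g(V_{kt},L(X_{k,t+1}))$ on $\{k\in S_{t+1}^*\}$ and $V_{k,t+1}=V_{kt}$ otherwise, where $g$ is strictly increasing in each argument (this follows from $U(v)=v/[\rho(1-v)]$ being increasing on $[0,1)$, the map $(U,L)\mapsto(1+U)L/(1-\rho)$ being increasing, and $V=U/(U+1/\rho)$ being increasing in $U$). Writing $f_v=\alpha(v)q+\beta(v)p$ with $\alpha(v)=v+(1-v)\rho$ and $\beta(v)=(1-v)(1-\rho)$, a short algebraic computation yields $\alpha(v')\beta(v)-\alpha(v)\beta(v')=(1-\rho)(v'-v)>0$, so $f_{v'}/f_v$ is an increasing function of $L$. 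Hence $f_{v'}$ dominates $f_v$ in the monotone likelihood ratio order, which implies first-order stochastic dominance of $L(X_{k,t+1})$ under $f_{v'}$ over $f_v$. A standard FSD coupling produces $(L,L')$ with the correct marginals and $L\leq L'$ almost surely; applying $g$ and using the identity coupling in the no-exposure case gives $V_{k,t+1}\leq V'_{k,t+1}$ a.s., which is the desired monotonicity of the kernel $K$.

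The principal technical obstacle will be the degenerate start-up regime in which $e_{kt}\leq 1$ forces $V_{kt}=0$, so that the single value $0$ encodes several internally distinct configurations (never exposed vs.\ exposed exactly once) whose one-step predictive distributions for $X_{k,t+1}$ differ. I plan to resolve this by augmenting the state with the capped exposure counter $\min(e_{kt},2)\in\{0,1,2\}$, verifying Markovianity and stochastic monotonicity on the augmented space under the lexicographic order, and observing that this refinement is transparent to the downstream applications in the proof of Theorem~\ref{thm:uniform-optimality}, which only rely on comparisons between chains that have already entered the non-degenerate regime. With this caveat in place, the remaining details parallel the corresponding treatment of Lemmas D.7--D.9 in \cite{chen2019compound}, modified to accommodate the randomized sampling $\{S_t^*\}_{t\geq 1}$.
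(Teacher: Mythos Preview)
Your proposal follows essentially the same route as the paper's proof: both establish the Markov property by showing the predictive mixture weight on $q$ equals $v+(1-v)\rho$ (the paper writes this as $\delta_{k,t}=(U_{k,t}+1)/(U_{k,t}+1/\rho)$, which is the same quantity), and both prove stochastic monotonicity by coupling the likelihood ratios together with a common Bernoulli for the sampling indicator---the paper invokes Lemma~D.6 of \cite{chen2019compound} at the point where you give the direct MLR/FSD argument. Your explicit treatment of the start-up regime $e_{kt}\leq 1$ via state augmentation is a careful addition that the paper's proof silently passes over.
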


\begin{proof}[Proof of Lemma~\ref{lemma:modified-d9}]
	We first derive the conditional distribution of $(V_{k,t+1},1_{\{k\in S^*_{t+1}\} })$ given $V_{k,1},\cdots,V_{k,t}$. According to Proposition~\ref{prop:update-post} and that $S^*_{t+1}$ is obtained by a uniform sampling with a weight $\lambda$, we have $1_{  \{ k\in S^*_{t+1}\}  }\sim\text{Bernoulli}(\lambda)$, independent with $V_{k,1},\cdots,V_{k,t}$, and
	\begin{equation}\label{eq:v-on-u}
	V_{k,t+1} =
		\begin{cases}
			V_{k,t} &\text{ if }  1_{\{k\in S^*_{t+1}\}} = 0\\
			\frac{(U_{k,t}+1)\frac{L_{k,t+1}}{1-\rho}}{(U_{k,t}+1)\frac{L_{k,t+1}}{1-\rho}+\frac{1}{\rho}} &\text{ if }  1_{\{k\in S^*_{t+1}\}} = 1
		\end{cases},
	\end{equation}
	where $L_{k,t+1}:=q(X_{k,t+1})/p(X_{k,t+1})$.
	In addition, given $1_{\{k\in S^{*}_1\}},\cdots 1_{\{k\in S^*_{t}\}}$, $X_{k,s}$ for $k\in S^*_{s}$ ($s=1,\cdots,t$), and $1_{\{k\in S^*_{t+1}\}} = 1$, the conditional density of $X_{k,t+1}$ is $\delta_{k,t}q(x)+(1-\delta_{k,t})p(x)$, where
	\begin{equation}
		\begin{split}
		\delta_{k,t}:&= P(\tau_k\leq t|1_{\{k\in S^{*}_s\}} \text{ and }X_{k,s} \text{ if } k\in S^*_s, \text{ for }1\leq s\leq t )\\
		&=\frac{\sum_{s=1}^{e_{kt}-1}P(\gamma_k=s)\prod_{r=1}^s p(X_{k,l_r})\prod_{r=s+1}^{e_{kt}}q(X_{k,l_r}) + P(\gamma_k=e_{kt}|e_{kt})\prod_{r=1}^{e_{kt}} p(X_{k,l_r})}{\sum_{s=1}^{e_{kt}-1}P(\gamma_k=s)\prod_{r=1}^s p(X_{k,l_r})\prod_{r=s+1}^{e_{kt}}q(X_{k,l_r})+P(\gamma_k\geq e_{kt}|e_{kt})\prod_{s=1}^{e_{kt}} p(X_{l_s})}
		\end{split}
	\end{equation}
	Simplifying the above display in a same way as we did in the proof of Proposition~\ref{prop:update-post}, we arrive at
	\begin{equation}\label{eq:delta-on-u}
		\delta_{k,t} = \frac{U_{k,t}+1}{U_{k,t}+1/\rho}.
	\end{equation}
	From the above derivation, we can see that the conditional distribution of $V_{k,t+1}$ given $V_{k,1}\cdots, V_{k,t}$ is determined by $V_{k,t}$ in the same way across time. Thus, it is a homogeneous Markov chain. To see it is stochastically monotone, we construct coupling a $(\hat{V},\hat{V}')$ for the conditional distribution of $V_{k,t+1}$ given $V_{k,t}=v$ and $V_{k,t}=v'$ with $v\leq v'$ as follows.

	First, let $\hat{U}=\frac{v}{\rho(1-v)}$, $\hat{U}'=\frac{v'}{\rho(1-v')}$, $\hat{\delta}=\frac{\hat{U}+1}{\hat{U}+1/\rho}$, and $\hat{\delta}'=\frac{\hat{U}'+1}{\hat{U}'+1/\rho}$. Because $v\leq v'$, we can easily see that  $\hat{\delta}'\geq \hat{\delta}$ and $\hat{U}'\geq \hat{U}$.
	Second, according to Lemma D.6 in \cite{chen2019compound} and $\hat{\delta}\leq \hat{\delta}'$, there exists a coupling $(\hat{L},\hat{L}')$ such that $\hat{L}$ has the same distribution as $q(X)/p(X)$ where $X\sim \hat{\delta}q(x)+(1-\hat{\delta})p(x)$, $\hat{L}'$ has the same distribution as $q(X')/p(X')$ where $X'\sim \hat{\delta}'q(x)+(1-\hat{\delta}')p(x)$, and $\hat{L}\leq\hat{L}'$ a.s.
	Third, let $Z\sim \text{Bernoulli}(\lambda)$, and construct $(\hat{V},\hat{V}')$ as
	\begin{equation}
		\hat{V}=
		\begin{cases}
			v &\text{ if } Z=0\\
			\frac{(\hat{U}+1)\hat{L}/(1-\rho)}{(\hat{U}+1)\hat{L}/(1-\rho)+1/\rho} &\text{ if }Z=1
		\end{cases} \text{ and }\hat{V}'=
		\begin{cases}
			v' &\text{ if } Z=0\\
			\frac{(\hat{U}'+1)\hat{L}/(1-\rho)}{(\hat{U}'+1)\hat{L}/(1-\rho)+1/\rho} &\text{ if }Z=1
		\end{cases}.
	\end{equation}
	Because $v\leq v'$, $\hat{U}\leq\hat{U}'$, $\hat{L}\leq\hat{L}'$ a.s. and the same $Z$ is used in constructing $\hat{V}$ and $\hat{V}'$, we can see that $\hat{V}\leq\hat{V}$ a.s.

	On the other hand, according to equations \eqref{eq:v-on-u} and \eqref{eq:delta-on-u}, $U_{k,t}=\frac{V_{k,t}}{\rho(1-V_{k,t})}$, and the conditional distribution of $X_{k,t+1}$ given $V_{k,t}$, it is not hard to verify that $\hat{V}$ has the same distribution as $V_{k,t+1}|V_{k,t}=v$ and $\hat{V}'$ has the same  distribution as $V_{k,t+1}|V_{k,t}=v'$. Because such a coupling exists for all $v\leq v'$, we conclude that the transition Kernel is stochastically monotone.
\end{proof}

Next, we extend Lemmas~D.10 -- D.14 in \cite{chen2019compound} to the current settings. We first  define several useful notation. Let $\fd=(d_1,d_2,\cdots,d_t,\cdots)$ denote a detection rule at all the time points. Here, each $d_t$ can be understood as a function that decides $S_{t+1}$ (i.e., selecting $D_t$) based on all historical information up to time $t$ (i.e., a measurable function with respect to $\mathcal F_t$). Also, let $d_1^*,d_2^*,\cdots$ be the one-step detection rules following Algorithm~\ref{alg:one-step} at different time points. In other words, the proposed detection rule is represented as $\fd^*:=(d_1^*,d_2^*,\cdots)$.

Let $\HH_t=\big\{\{X_{ks}\}_{k\in S^*_s}, S^*_{s}, S_s, s=1,\cdots, t \big\}$ denote the historical information obtained up to time $t$ and $\hh_t$ be a realization of $\HH_t$ in its support. In addition, we use $\HH_t^{\fd}$  and $W_{k,t}^{\fd}$ to indicate the corresponding historical information and posterior probability of a change point following the detection rule $\fd$.

Define a partially ordered space $(\spaceo,\lleq)$ as follows.
 Let
\begin{equation}
	 \spaceo=\bigcup_{k=1}^K \left\{\vv=(v_1,\cdots,v_k)\in[0,1]^k: 0\leq v_1\leq \cdots v_k\leq 1
		\right\}\cup \{\ab\},
\end{equation}
where $\ab$ represents a vector with zero length.
For $\uu\in\spaceo$, let $\card(\uu)$ be the length of the vector $\uu$.
A partial order relation $\lleq$ over $\spaceo$ is defined as follows.
For $\uu,\vv\in\spaceo$, we say $\uu\lleq
 \vv$ if $\card(\uu)\geq \card(\vv)$ and $u_i\leq v_i$ for $i=1,...,\card(\vv)$. In addition, we say $\uu\lleq \ab$ for any $\uu\in\spaceo$.
The next lemma is an extension of \cite[Lemma~D.10]{chen2019compound} to our current settings.

\begin{lemma}[Modified Lemmas~D.10 of \cite{chen2019compound}]\label{lemma:modified-d10}
	For any $t\geq 1$ and any decision rule $\fd$, $[W^{\fd}_{S_{t+1}^{\fd},t+1}]$ is conditionally independent with $\mathcal{F}^{\fd}_{t}$ given $[W^{\fd}_{S^{\fd}_{t+1},t}]$. Moreover, the conditional density of $[W^{\fd}_{S^{\fd}_{t+1},t+1}]$ at $\vv$ given $[W^{\fd}_{S^{\fd}_{t+1},t}]=\uu\in\spaceo$ is
	\begin{equation}
	\KKa(\uu,\vv)
	:=
	\begin{cases}
		\sum_{\pi \in \PP_{\card(\uu)}}\prod_{l=1}^{\card(\uu)}K(u_l,v_{\pi(l)}) &\text{ if } \card(\uu)=\card(\vv)\geq 1,\\
		1 & \text{ if } \card(\uu)=\card(\vv)=0,\\
		0 &\text{ otherwise,}
	\end{cases}
	\end{equation}
	where $\PP_{m}$ denotes the set of all permutations over $\{1,\cdots, m\}$ and the transition Kernel $K(\cdot,\cdot)$ is defined in Lemma~\ref{lemma:modified-d9}.
\end{lemma}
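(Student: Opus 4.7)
The plan is to prove the lemma in three steps, leveraging the independence of observational updates across items and the homogeneous Markov property established in Lemma~\ref{lemma:modified-d9}.

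First, I would observe that $S_{t+1}^{\fd}$ is $\mathcal{F}_{t}^{\fd}$-measurable: the set of items carried forward is determined by the detection rule $d_t$ applied to the history through time $t$, and under the assumption that no new items are added (as in Theorem~\ref{thm:uniform-optimality}), $S_{t+1}^{\fd}\subseteq S_{t}^{\fd}$. Consequently, conditional on $\mathcal{F}_{t}^{\fd}$, both the cardinality and identities of the items in $S_{t+1}^{\fd}$ and the vector $(W_{k,t}^{\fd}:k\in S_{t+1}^{\fd})$ are fixed; in particular, its sorted version $[W_{S_{t+1}^{\fd},t}^{\fd}]$ is $\mathcal{F}_{t}^{\fd}$-measurable.

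Second, I would invoke the item-level transition from Lemma~\ref{lemma:modified-d9}. For each $k\in S_{t+1}^{\fd}$, the one-step update from $W_{k,t}^{\fd}$ to $W_{k,t+1}^{\fd}$ depends only on the sampling indicator $\mathbf{1}_{\{k\in S_{t+1}^{*}\}}$ and, if sampled, on $X_{k,t+1}$. Under uniform sampling these indicators are i.i.d.\ Bernoulli$(\lambda)$ across $k$ and independent of $\mathcal{F}_{t}^{\fd}$, and conditional on the indicators and on the latent change points $\tau_k$ the observations $X_{k,t+1}$ for distinct items are independent by the modelling assumption. Hence, the updates $W_{k,t+1}^{\fd}$ for $k\in S_{t+1}^{\fd}$ are conditionally independent given $\mathcal{F}_{t}^{\fd}$, each with transition density $K(W_{k,t}^{\fd},\cdot)$ from Lemma~\ref{lemma:modified-d9}. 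Because this joint conditional law depends on $\mathcal{F}_{t}^{\fd}$ only through the multi-set $\{W_{k,t}^{\fd}:k\in S_{t+1}^{\fd}\}$, i.e.\ through $[W_{S_{t+1}^{\fd},t}^{\fd}]$, the claimed conditional independence follows.

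Third, I would translate conditional independence of the unsorted updates into a density formula for the sorted vector. Label the items of $S_{t+1}^{\fd}$ so that their $W_{\cdot,t}^{\fd}$ values equal $u_1\leq\cdots\leq u_m$ with $m=\card(\uu)$. Their independent updates yield an unsorted vector whose $l$-th component has density $K(u_l,\cdot)$. The sorted vector $[W_{S_{t+1}^{\fd},t+1}^{\fd}]$ at $\vv=(v_1\leq\cdots\leq v_m)\in\spaceo$ is the image of this unsorted vector under sorting, and by the standard order-statistic change-of-variables its density is obtained by summing the joint density over all $m!$ ways to match output positions to input positions, i.e.,
\begin{equation*}
\KKa(\uu,\vv)=\sum_{\pi\in\PP_{m}}\prod_{l=1}^{m}K\bigl(u_l,v_{\pi(l)}\bigr).
\end{equation*}
The cases $\card(\uu)=\card(\vv)=0$ (empty vector, density trivially $1$) and $\card(\uu)\neq\card(\vv)$ (impossible since sorting preserves length, density $0$) complete the formula.

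The main obstacle will be in step three: rigorously writing the order-statistic density with respect to the correct reference measure on $\spaceo$, since $K(u,\cdot)$ carries an atom at $u$ itself (coming from the event $\{k\notin S_{t+1}^{*}\}$ in \eqref{eq:v-on-u}) in addition to an absolutely continuous part on $[0,1]$. The sum-over-permutations formula remains valid if one interprets $K$ as a kernel on $[0,1]$ and the product $\prod_{l}K(u_l,v_{\pi(l)})$ as the corresponding product kernel on $[0,1]^{m}$, pushed forward to the sorted subset. Ties among the $u_l$'s and coincidences $v_{\pi(l)}=u_l$ are absorbed into this kernel interpretation exactly as in the proof of Lemma~D.10 of \cite{chen2019compound}, which I would closely follow.
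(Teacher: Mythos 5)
Your proposal is correct and follows essentially the same route as the paper: reduce to the item-level transitions of Lemma~\ref{lemma:modified-d9}, note that the updates for $k\in S_{t+1}^{\fd}$ are conditionally independent given the history with joint law depending only on the multiset $\{W_{k,t}^{\fd}\}$, and then obtain $\KKa$ as the order-statistic density via the sum over permutations. Your closing remark about the reference measure (the atom of $K(u,\cdot)$ at $u$ from the non-sampling event) is a genuine subtlety that the paper's proof leaves implicit by deferring to Lemma~D.10 of \cite{chen2019compound}, so flagging it is a welcome addition rather than a deviation.
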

\begin{proof}[Proof of Lemma~\ref{lemma:modified-d10}]
The trivial cases where $\dim(\uu)=0$ or $\dim(\uu)\neq\dim(\vv)$ are proved in the same way as that of Lemma D.10 in \ref{lemma:modified-d10}. In the rest of the proof, we will focus on the non-trivial case that $\dim(\uu)=\dim(\vv)=m$ for some positive integer $m$.

First, we consider the conditional distribution of $[W^{\fd}_{S^{\fd}_{t+1},t+1}]$ at $\vv\in\spaceo$, given $\HH^{\fd}_{t}=\hh_t$, $S^{\fd}_{t+1}=s_{t+1}$ and $W^{\fd}_{S_{t+1}^{\fd},t}=\uu$ for $\uu\in\spaceo$,
  Note that now we have $\{S^{*\fd}_s\}_{1\leq s\leq t}$ in the history information $\HH^{\fd}_t$ and thus are conditioning on more variables under the current settings, compared with that of \cite{chen2019compound}. On the other hand, regardless of the information in $\{S^{*\fd}_s\}_{1\leq s\leq t}$, we still have $W^{\fd}_{k,t+1}$s are conditionally independent for different $k\in s_{t+1}$ given $S^{\fd}_{t+1}=s_{t+1}$ and $W^{\fd}_{S^{\fd}_{t+1},t}=\uu$. In addition, given $S^{\fd}_{t+1}=s_{t+1}$ and $W^{\fd}_{S^{\fd}_{t+1},t}=\uu$, $W^{\fd}_{k,t+1}$ is the same as $V_{k,t+1}$ for $k\in s_{t+1}$ and is independent of
$\HH^{\fd}_t=\hh_t$
   with the conditional distribution $\prod_{l=1}^m K(u_l,v_l)$. The rest of the proof is the derivation of the distribution of the order statistic of $W^{\fd}_{S^{\fd}_{t+1},t}$, and is the same as that of Lemma D.10 in \cite{chen2019compound}. Thus, we omit the repetitive details.	
\end{proof}
Following this lemma,  Lemmas D.11--D.14 in \cite{chen2019compound} can be extended to the current settings because their proof are based on Lemmas D.1--D6, D.9--D.10 in \cite{chen2019compound}, which have already been extended to the current settings. With these extensions, we are able to modify and extend key results in \cite{chen2019compound} as follows.

 For each $t_0\in \Zp$, and an arbitrary $\fd$ that controls the risk at a desired level $\alpha$, we consider the following two detection rules based on $\fd$
 $$
\fd_1 = (d_1,\cdots, d_{t_0-1},d^*_{t_0}, d_{t_0+1}^*,\cdots)
\text{ and } \fd_2 = (d_1,\cdots, d_{t_0}, d^*_{t_0+1}, d^*_{t_0+2}\cdots).
 $$
 That is, $\fd_1$ is the detection rule by first select $S_1,\cdots S_{t_0}$ following $\fd$ and switch to the proposed one-step update rule in Algorithm~\ref{alg:one-step} to select $S_{t_0+1},S_{t_0+2},\cdots$. $\fd_2$ is a similar switching rule but switch to the one-step update rule at time $t_0+1$ rather than $t_0$. Note that the $S_t$ selected by $\fd_1$ and $\fd_2$ coincides for $t=1,\cdots, t_0$ and may be different for $t=t_0+1,\cdots.$

\begin{lemma}[Modification of Proposition 4 in \cite{chen2019compound}]\label{lemma:compare-optimal-and-other}
	There exists a coupling of $\spaceo$-valued random variables $(\hat{W},\hat{W}')$ such that
	\begin{equation}
	\hat{W}\eqd \big[ (W_{k,t_0+1}^{\fd_1})_{k\in S_{t_0+1}^{\fd_1}}\big]\big|\HH_{t_0}^{\fd_1} = \hh_{t_0} ~~\text{ and }~~	\hat{W}'\eqd \big[ (W_{k,t_0+1}^{\fd_2})_{k\in S_{t_0+1}^{\fd_2}}\big]\big|\HH_{t_0}^{\fd_2} = \hh_{t_0}
	\end{equation}
	and $\hat{W}\lleq \hat{W}'$ a.s., where $[\cdot]$ denotes the order statistic of a vector and $\eqd$ denotes that random variables on both sides are identically distributed.
\end{lemma}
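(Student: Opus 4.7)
The plan is a three-step reduction: (i) use the matched initial histories to align the two detection rules through time $t_0-1$; (ii) exploit the optimality of $d^*_{t_0}$ to establish a partial-order domination of the retained posterior probabilities at time $t_0$; and (iii) propagate this domination one step forward via the stochastic monotonicity of the transition kernel $\KKa$ from Lemma~\ref{lemma:modified-d10}.

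First, because $\fd_1$ and $\fd_2$ share the detection rules $d_1,\dots,d_{t_0-1}$, conditioning on $\HH_{t_0}^{\fd_1}=\HH_{t_0}^{\fd_2}=\hh_{t_0}$ fixes a common pool $S_{t_0}$ and common posterior probabilities $W_{k,t_0}$ for $k\in S_{t_0}$. The rules diverge at time $t_0$: $\fd_1$ applies $d^*_{t_0}$, which by Algorithm~\ref{alg:one-step} retains exactly the items with the smallest values of $W_{k,t_0}$ and whose detection size is minimized subject to the risk constraint; $\fd_2$ applies the arbitrary rule $d_{t_0}$, which also satisfies the risk constraint by assumption on $\fd$. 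By Proposition~\ref{prop:one-step}, $|D^{\fd_1}_{t_0}|\le|D^{\fd_2}_{t_0}|$, so $|S^{\fd_1}_{t_0+1}|\ge|S^{\fd_2}_{t_0+1}|$.

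Second, let $\uu=[(W_{k,t_0})_{k\in S^{\fd_1}_{t_0+1}}]$ and $\vv=[(W_{k,t_0})_{k\in S^{\fd_2}_{t_0+1}}]$ denote the ascending order statistics. I claim $\uu\lleq\vv$. Dimensionally, $\card(\uu)\ge\card(\vv)$ by the preceding inequality. Pointwise, $u_i$ equals the $i$-th smallest element of $\{W_{k,t_0}:k\in S_{t_0}\}$ (because $S^{\fd_1}_{t_0+1}$ is precisely the bottom portion of the sorted list by the construction in Algorithm~\ref{alg:one-step}), whereas $v_i$ is the $i$-th smallest over a \emph{subset} $S^{\fd_2}_{t_0+1}\subseteq S_{t_0}$; restricting to a subset can only weakly increase each order statistic, so $u_i\le v_i$ for $i=1,\dots,\card(\vv)$. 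Ties, which are broken deterministically in Algorithm~\ref{alg:one-step} by item index, are broken consistently under both rules because the underlying posterior probabilities and index labels are identical.

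Third, Lemma~\ref{lemma:modified-d10} says that the conditional law of $[(W^{\fd_i}_{k,t_0+1})_{k\in S^{\fd_i}_{t_0+1}}]$ given $\HH_{t_0}^{\fd_i}=\hh_{t_0}$ depends only on the order statistic of the posterior probabilities at time $t_0$ on the surviving pool, via the kernel $\KKa$. Hence $\hat W\sim\KKa(\uu,\cdot)$ and $\hat W'\sim\KKa(\vv,\cdot)$, and the desired coupling reduces to the stochastic monotonicity of $\KKa$ on $(\spaceo,\lleq)$. This is the natural extension of Lemmas D.11--D.14 of \cite{chen2019compound} to the present setting, with the scalar monotone coupling of $K$ supplied by Lemma~\ref{lemma:modified-d9}.

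The main obstacle lies in the cross-dimensional aspect of the monotonicity of $\KKa$. The kernel has a density only when $\card(\uu)=\card(\vv)$, but here $\card(\uu)$ may strictly exceed $\card(\vv)$. One must couple the $\card(\vv)$ ``matched'' coordinates of $\hat W$ and $\hat W'$ componentwise using the scalar monotone coupling from Lemma~\ref{lemma:modified-d9}, draw the remaining $\card(\uu)-\card(\vv)$ coordinates of $\hat W$ independently from the appropriate single-item transitions, and then re-sort both vectors into their order statistics while preserving the $\lleq$ relation through the permutation sum in the definition of $\KKa$. This combinatorial bookkeeping, together with threading the Bernoulli sampling indicators $1_{\{k\in S^*_{t_0+1}\}}$ already embedded in $K$ by Lemma~\ref{lemma:modified-d9}, is the technically involved portion of the argument and parallels the treatment in \cite{chen2019compound}; we omit the repetitive details.
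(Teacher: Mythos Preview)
Your proposal is correct and follows essentially the same route as the paper: both arguments fix the common history to identify the deterministic order statistics $\uu=[W^{\fd_1}_{S^{\fd_1}_{t_0+1},t_0}]$ and $\vv=[W^{\fd_2}_{S^{\fd_2}_{t_0+1},t_0}]$, establish $\uu\lleq\vv$ (the paper by citing Lemmas~D.3--D.4 of \cite{chen2019compound}, you by the direct ``bottom-of-the-sorted-list versus arbitrary subset'' argument, which is the same content), and then invoke Lemma~\ref{lemma:modified-d10} so that $\hat W\sim\KKa(\uu,\cdot)$, $\hat W'\sim\KKa(\vv,\cdot)$, with the coupling coming from the stochastic monotonicity of $\KKa$ inherited from the extended Lemmas~D.11--D.14. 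The cross-dimensional coupling sketch you give in the final paragraph is exactly what the paper also defers to \cite{chen2019compound}.
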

In the above lemma, we note that $\HH_{t_0}^{\fd_1}$ is the same as $\HH_{t_0}^{\fd_2}$ and further the same as $\HH_{t_0}^{\fd}$ by construction, and thus they have the same support. Also, $(W_{k,t_0+1}^{\fd_2})_{k\in S_{t_0+1}^{\fd_2}}$ is same as $(W_{k,t_0+1}^{\fd})_{k\in S_{t_0+1}^{\fd}}$.

\begin{proof}[Proof of Lemma~\ref{lemma:compare-optimal-and-other}]

The main difference between  Lemma~\ref{lemma:compare-optimal-and-other} and Proposition 4 in \cite{chen2019compound}  is that the history process $\HH^{\fd_l}_{t_0}$ contains additional information about $\{S^{*\fd_l}_s\}_{1\leq s\leq t_0}$ in the current setting ($l=1,2$).
On the other hand, we still have that $W^{\fd}_{S^{\fd}_{t_0},t_0}$ is determined by $\HH^{\fd}_{t_0}=\hh_{t_0}$ and does not depend on $\fd$. We denote it by $\ww_{t_0}$. In addition, $[W^{\fd}_{S^{\fd}_{t_0+1},t_0}]$ is determined by $\hh_{t_0}$ and $\fd$ (through $d_{t_0+1}$). Let us denote $[W^{\fd}_{S^{\fd}_{t_0+1},t_0}]=\ww_{t_0+1}^*$, which is a deterministic function of $\hh_{t_0}$ and $\fd$. According to Lemma~\ref{lemma:modified-d10} and replacing $t$ by $t_0$, the conditional distribution of $[W^{\fd}_{S^{\fd}_{t_0+1},t_0+1}]$ given $\HH^{\fd}_{t_0}=\hh_{t_0}$ has the density function $\KKa(\ww^*_{t_0+1},\cdot)$. Similarly, by replacing $\fd$ with $\fd_1$, we have $[W^{\fd_1}_{S^{\fd_1}_{t_0+1},t_0}]$, denoted by $\ww_{t_0+1}$, is determined by $\hh_{t_0}$ given $\HH^{\fd_1}_{t_0}=\hh_{t_0}$. In addition, the conditional density of $[W^{\fd_1}_{S^{\fd_1}_{t_0+1},t_0+1}]$ is $\KKa(\ww_{t_0+1},\cdot)$.

Note that under the detection rule in Algorithm~\ref{alg:one-step}, Lemma~D.3 and D.4 in \cite{chen2019compound} still hold and thus $\ww_{t_0+1}\lleq \ww_{t_0+1}^*$. The rest of the proof follows similarly as that of Proposition 4 in \cite{chen2019compound}. We omit the repetitive details.
\end{proof}
\begin{lemma}[Modification of Proposition 5 in \cite{chen2019compound}]\label{lemma:optimal-mono}
Let $Y_s=\big[(W_{k,t_0+s}^{\fd_1})_{k\in S_{t_0+s}^{\fd_1}}\big]$. Then, for any $\yy,\yy'\in \spaceo$ such that $\yy\lleq\yy'$, there exists a coupling process $\{(\hat{Y}_s,\hat{Y}'_s)\}_{s=0,1,\cdots}$, satisfying
\begin{enumerate}
 	\item $\{\hat{Y}_s\}_{s\geq 0}\eqd \{Y_s\}_{s\geq 0}\big|Y_0=\yy$, $\{\hat{Y}'_s\}_{s\geq 0}\eqd \{Y_s\}_{s\geq 0}\big|Y_0=\yy'$.
 	\item $\hat{Y}_s\lleq\hat{Y}'_s$ a.s. for all $s\geq 0$.
 \end{enumerate}
 Moreover, the process $(\hat{Y}_s,\hat{Y}'_s)$ does not dependent on $\fd$, $t_0$, nor $\HH_{t_0}^{\fd}$.
\end{lemma}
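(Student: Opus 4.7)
The plan is to build the coupling iteratively in $s$, leveraging two monotonicity properties of the dynamics under rule $\fd_1$: the one-step detection map $d^*$ is monotone on $(\spaceo,\lleq)$, and the Markov transition $\KKa$ of Lemma~\ref{lemma:modified-d10} admits a monotone coupling on $(\spaceo,\lleq)$. For the base case $s=0$, set $(\hat Y_0,\hat Y'_0)=(\yy,\yy')$, which trivially satisfies both marginal identities and $\hat Y_0\lleq\hat Y'_0$ by hypothesis. For the inductive step, suppose the coupling has been constructed up to time $s$ with $\hat Y_s\lleq \hat Y'_s$ almost surely. One transition of the chain under $\fd_1$ consists of (a) applying Algorithm~\ref{alg:one-step} at time $t_0+s$ to form the sorted vector of posteriors of undetected items, and (b) evolving these through the kernel $\KKa$ corresponding to uniform sampling and Bayesian updating. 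I will couple these two operations sequentially.

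For the detection step, write $\hat Y_s=(u_1,\ldots,u_m)$ and $\hat Y'_s=(v_1,\ldots,v_{m'})$ in ascending order, with $m\geq m'$ and $u_i\leq v_i$ for $i\leq m'$. Algorithm~\ref{alg:one-step} retains the first $n(\cdot)$ entries, where $n(\uu)$ is the largest $n\leq\card(\uu)$ satisfying $\frac{1}{n}\sum_{i=1}^n u_i\leq\alpha$. Averaging the pointwise inequalities gives $\frac{1}{n}\sum_{i=1}^n u_i\leq\frac{1}{n}\sum_{i=1}^n v_i$ for every $n\leq m'$, so any $n$ feasible for $\hat Y'_s$ is also feasible for $\hat Y_s$; hence $n(\hat Y_s)\geq n(\hat Y'_s)$. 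Together with the surviving componentwise inequalities on the first $n(\hat Y'_s)$ coordinates, this yields $Z_s\lleq Z'_s$ for the post-detection vectors $Z_s,Z'_s$. This extends Lemmas~D.3--D.4 of \cite{chen2019compound} to the case where $\card(\uu)$ may exceed $\card(\vv)$.

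For the transition step, Lemma~\ref{lemma:modified-d10} realizes $\KKa(\uu,\cdot)$ as the distribution of the order statistics of independent one-coordinate transitions through $K$, whose stochastic monotonicity is established in Lemma~\ref{lemma:modified-d9} via common Bernoulli coins for the sampling indicators and a likelihood-ratio coupling (Lemma~D.6 of \cite{chen2019compound}). Using shared randomness for the sampling indicators and the monotone coupling of $K(u_i,\cdot)$ and $K(v_i,\cdot)$ for $i\leq\card(Z'_s)$, with independent transitions through $K(u_i,\cdot)$ for $i>\card(Z'_s)$, I obtain pre-sorting values $\tilde V_i\leq\tilde V'_i$ for paired coordinates. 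Sorting then delivers $(\hat Y_{s+1},\hat Y'_{s+1})$. The main obstacle is verifying $\hat Y_{s+1}\lleq\hat Y'_{s+1}$: coordinatewise monotone coupling does not in general survive sorting when $\card(\hat Y_s)>\card(\hat Y'_s)$, because the extra $\tilde V_i$ can interleave with the paired ones. The resolution, paralleling the order-statistic argument behind Lemmas~D.11--D.14 of \cite{chen2019compound}, exploits the symmetric permutation-sum structure of $\KKa$ to re-match sorted coordinates and absorb the surplus $\tilde V_i$ into positions beyond $\card(Z'_s)$. Finally, every ingredient of the coupling (the Bernoulli coins, the common random variables driving $K$, and the initial values $\yy,\yy'$) is independent of $\fd$, $t_0$, and $\HH_{t_0}^{\fd}$, yielding the stated universality of the process $(\hat Y_s,\hat Y'_s)$.
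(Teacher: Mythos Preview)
Your proposal is correct and follows essentially the same approach as the paper: the paper simply defers to the extensions of Lemmas~D.5, D.13, and D.14 of \cite{chen2019compound}, which encode precisely the monotonicity of the detection map $d^*$ on $(\spaceo,\lleq)$ and the stochastic monotonicity of $\KKa$ that you unpack explicitly in your inductive construction. Your worry about sorting when $\card(Z_s)>\card(Z'_s)$ is in fact easily dispatched---the $i$-th order statistic of a vector is nonincreasing when extra components are appended, and coordinatewise inequalities between equal-length vectors survive sorting---so the inductive step closes directly without any ``re-matching.''
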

\begin{proof}[Proof of Lemma~\ref{lemma:optimal-mono}]
	The proof of Lemma~\ref{lemma:optimal-mono} is similar to that of Proposition in \cite{chen2019compound}, which is derived based on Lemmas D.5, D.13 and D.14 in \cite{chen2019compound}. Because we have extended these lemmas to the current settings, the proof of Lemma~\ref{lemma:optimal-mono} follows similarly.
\end{proof}
With Lemmas~\ref{lemma:compare-optimal-and-other} and Lemma~\ref{lemma:optimal-mono}, the rest of the proof follows similarly as that of \cite[Theorem 1 and Theorem 5]{chen2019compound}. We omit the details.




\section{Construction of SIR Statistic}\label{app:SIR}

We denote $$\xi_{kt}^0(m) = \int f(\theta \vert \bbb_k) \frac{1}{\sqrt{2\pi}} \exp(-(\theta - m)^2/2) d\theta.$$
as a function of population mean $m$. Let $\hat m_t$ be given by solving optimization~\eqref{eq:mle}.  Then our estimate of $\xi_{kt}^0$ is
$$\hat \xi_{kt}^0 = \xi_{kt}^0(\hat m_t).$$

We do Taylor expansion of $\xi_{kt}^0(\hat m_t)$ at $m_t^*$ and obtain
$$\hat{\xi}_{kt}^0= \xi_{kt}^0(m_t^*) + \xi_{kt}^{0'}(m_t^*)(\hat m_t - m_t^*) + o_p(1/\sqrt{N_t}),$$
 where $\xi_{kt}^{0'}(m)$ is the derivative of $\xi_{kt}^{0}(m)$ that takes the form
$$\xi_{kt}^{0'}(m) = \int f(\theta \vert \bbb_k) \frac{1}{\sqrt{2\pi}} \exp(-(\theta - m)^2/2) (\theta - m) d\theta.$$
By expanding the score equation for $\hat m_t$ at $m_t^*$, we obtain
$$\hat m_t -  m_t^* = \frac{\sum_{n=1}^{N_t} E(\theta_{tn} - m_t^* \vert  Y_{ktn}, k\in S_t^\dagger )/\kappa_t}{N_t} + o_p(1/\sqrt{N_t}),$$
where $\kappa_t = E[E(\theta_{t1} - m_t^* \vert Y_{kt1}, k\in S_t^\dagger)]^2 = Var(E(\theta_{t1}\vert Y_{kt1}, k\in S_t^\dagger))$.
Therefore,
$$\bar{Y}_{kt} - \xi_{kt}^0(\hat m_t) = \frac{\sum_{n=1}^{N_t} (Y_{ktn} - \xi_{kt}^{0'}(m_t^*) E(\theta_{tn} - m_t^* \vert  Y_{ktn}, k\in S_t^\dagger )/\kappa_t) }{N_t} - \xi_{kt}^0(m_t^*) + o_p(1/\sqrt{N_t}).$$
We thus have
$$Var(\bar{Y}_{kt} - \xi_{kt}^0(\hat m_t)) = \frac{Var(Y_{kt1} - \xi_{kt}^{0'}(m_t^*)E(\theta_{t1} \vert Y_{kt1}, k \in S_t^\dagger )/\kappa_t )}{N_t} + o(1/N_t),$$
which can be approximated by
$$\frac{\sum_{n=1}^{N_t} (Y_{ktn} + \xi_{kt}^{0'}(\hat m_t) \bar\theta_{tn}/\hat\kappa_t - \bar{Y}_{kt} - \xi_{kt}^{0'}(\hat m_t)\bar \theta_t/\hat\kappa_t)^2}{N_t^2},$$
where $$\bar\theta_{tn} = \frac{\int \theta \prod_{k\in S_t^\dagger} f(\theta \vert \bbb_k)^{Y_{ktn}} (1-f(\theta \vert \bbb_k))^{1-Y_{ktn}} \frac{1}{\sqrt{2\pi}} \exp(-(\theta - \hat m_t)^2/2) d\theta}{\int  \prod_{k\in S_t^\dagger} f(\theta \vert \bbb_k)^{Y_{ktn}} (1-f(\theta \vert \bbb_k))^{1-Y_{ktn}} \frac{1}{\sqrt{2\pi}} \exp(-(\theta - \hat m_t)^2/2) d\theta},$$
and
$$\bar \theta_t = \frac{\sum_{n=1}^{N_t} \bar\theta_{tn}}{N_t},$$
and
$$\hat\kappa_t = \frac{\sum_{n=1}^{N_t} (\bar\theta_{tn} - \bar \theta_t)^2}{N_t}.$$

\section{Application to Continuous Testing}
{We provide a further discussion on applying the proposed method to continuous testing under any designs (e.g.,
CAT, MST, and fixed-form testing). We let each time point $t$ correspond to a
fixed period of time (e.g., one day). The duration of the time period may be chosen based on the test volume to allow adequate sample sizes in computing the monitoring statistics.
Let
$S_t$ be item pool for this period 
and $S^*_t \subset S_t$  be the set of items administered during this period.
It is possible that all the items in the pool are used during the period, and thus $S_t^* = S_t$. For adaptive testing such as CAT and MST, different test takers receive different sets of items in $S_t^*$. For each item $k$ in $S_t^*$, we will construct a monitoring statistic $X_{kt}$ based on item response data collected during this period and update the posterior probability of   $\tau_k < t$. For items not used in this test administration, its posterior probability will remain unchanged. Given the posterior probabilities, the local FNR risk measure is well-defined and the proposed procedure can be applied.

From the above discussion, we see that the proposed method can be applied to continuous testing, as long as we can construct a sensible monitoring statistic $X_{kt}$ and know its pre- and post-distributions exactly or partially and have exact or partial information about the prior distributions for change points,
so that Algorithms~1 or 2 can be applied. In fact, the SIR statistic defined in Section~4 can be adapted to this more general setting. We let $A_{kt}$ be the subset of test takers who have been assigned item $k$ in the $t$th test administration. Then the percent correct statistic $\bar Y_{kt}$ for item $k$ at time $t$ becomes
$\bar Y_{kt} = (\sum_{n \in A_{kt}}  Y_{ktn})/|A_{kt}|$. Based on this new percent correct statistic, an SIR statistic can be defined,  and its pre- and post-distributions can be approximated  under an IRT framework similarly as in Section~4.
Information about the prior distributions of change points may be obtained based on historical data.
}

\bibliographystyle{apalike}
\bibliography{ref}

\end{document}